\newtheorem{theorem}{Theorem}
\newtheorem{lemma}[theorem]{Lemma}
\newtheorem{prop}[theorem]{Proposition}
\newtheorem{definition}[theorem]{Definition}
\newcommand{\be}{\begin{equation}}
\newcommand{\ee}{\end{equation}}
\newcommand{\ben}{\begin{equation*}}
\newcommand{\een}{\end{equation*}}
\newcommand{\ba}{\begin{eqnarray}}
\newcommand{\ea}{\end{eqnarray}}
\newcommand{\indicator}[1]{\mathbbm{1}_{\{ {#1} \} }}
\def\hgi{|h_i|^2}
\title{Power Controlled Adaptive Sum-Capacity of  Fading MACs with Distributed  CSI}
\author{
\IEEEauthorblockN{Sibi Raj~B.~Pillai, Bikash~K.~Dey, Yash~Deshpande and Krishnamoorthy~Iyer} \\
\IEEEauthorblockA{Department of Electrical Engineering \\Indian Institute of Technology Bombay. \\
	{\tt \{bsraj,bikash\}@ee.iitb.ac.in}
}

}
\begin{document}

\maketitle

\usetikzlibrary{arrows}
\begin{abstract}
We consider the problem of finding optimal, fair and distributed power-rate strategies to achieve
 the sum capacity of the Gaussian multiple-access block-fading channel. In here, the transmitters have
 access to only their own fading coefficients, while the receiver has global access to all
 the fading coefficients.
Outage is not permitted in any communication block. The resulting average
sum-throughput is also known as `power-controlled adaptive sum-capacity', which appears
as an open problem in literature. 

This paper presents the power-controlled adaptive sum-capacity of a wide-class of popular 
MAC models. In particular, we propose a power-rate strategy in the presence of distributed channel
state information (CSI), which is throughput optimal when all the users have
identical channel statistics. The proposed scheme also has an efficient implementation using
successive cancellation and rate-splitting. 
We propose an upperbound when the channel laws are not 
identical. Furthermore, the optimal schemes are extended to situations in which each
transmitter has additional finite-rate partial CSI on the link quality of others. 
\end{abstract}

\section{Introduction}
\label{sec:intro}
 The multiple access channel (MAC) is a widely studied model in information theory, where
many users communicate to a single entity using a shared medium. With its natural
 applications in wireless communications, the so called fading MAC with additive 
white Gaussian noise is one of the popular MAC models. In here, the channel
from each user to the receiver is modeled by a multiplicative fading channel.

In order to find the rate-tuples at which reliable communication is possible
over the fading MAC model, it is important to make assumptions about the amount of
channel knowledge available at the transmitters and the receiver. It is natural
to assume that the receiver has access to the fading coefficients, by means
of pilot symbols. In other words, the receiver has full CSI.
 On the other hand, the same is not true about the transmitter.
We consider a model where each transmitter is fully aware of its own fading
coefficient (individual CSI), but that of no other. Towards the latter sections of this paper, we 
relax this assumption and equip the transmitter with finite-rate partial CSI 
of other links. 

We consider a slow fading model, which is modeled by block fading: the fading 
coefficients are constant over a
 block of channel uses, over which the codeword lasts. The transmitters, thus, are not allowed
 to take advantage of the ergodic nature of the fading process during coding, but may employ
 adaptive power and rates. This particular situation is motivated by systems involving occasional
 (opportunistic) access to a shared medium, such as in a cognitive radio or a sensor network with
 a star topology. Here, multiple users wish to communicate their data to the receiver over the
 awarded time slot in a fair but distributed fashion. These systems may lack the global
 user coordination information to do conventional multiplexing strategies like TDM. However,
 some limited coordination information can be made available or gleaned from the network, 
 for example, the total number of active users participating in a given slot.
%
%
It is natural to look for within-block coding in these systems and demand that communication
in each block to be outage-free, while
allowing for adaptively controlling the power and transmission-rates based on the available
channel knowledge. Furthermore, the power-adaptation strategy should respect the corresponding
average  transmit power at each  of the users.

 There is considerable literature on  multiaccess fading channels with instantaneous CSI. The
 Shannon capacity of a Gaussian MAC with CSI available only at the receiver is evaluated rigorously
 in \cite{ShamaiWyner97}. The optimal power control strategies to achieve capacity for the case of
 complete channel state information at the transmitters (CSIT) are given in \cite{KnoppHumblet95}
 and \cite{TseHanly98}. Coming to partial side information at the transmitters, \cite{DasNarayan02}
 gives the capacity region of a fading MAC under very general notions of CSI at the transmitters.
 These notions can be specialized to nearly all practical scenarios including individual transmitter
 CSI. However, our work differs from \cite{DasNarayan02} due to the block-fading assumption and the
 requirement of no outage in each block. The
 ergodic averaging inherently used in evaluating the Shannon capacity region in
 \cite{DasNarayan02} turns out to be essential because of the absence of complete CSI. Alternate
 notions of capacity motivated by different practical scenarios have also been  investigated:
 delay-limited capacity for the fading MAC is dealt with in \cite{TseHanly98} while
 \cite{EffrosGoldsmith10} defines the notions of expected capacity and capacity with outage for
 information unstable single-user channels. Other related works which consider partial CSI
 in a fading MAC setup are \cite{CemalSteinberg05}, where non-causal CSI is considered,
 and its generalization and unification with causal CSI in \cite{Jafar06}. 

The model that we consider in this paper, i.e. block fading MAC with individual CSI, 
is applicable to  a wide-range of situations. The model can effectively capture
a MAC with random access, 
where the availability
of a packet for transmission is indicated by binary fading states, available at
the respective transmitters~\cite{QiBe03}, \cite{AdTo05}, \cite{HwSeCi06}.
Building on this, \cite{HwMaGaCi07} generalized the fading MAC model and 
enforced an additional requirement that 
the probability of error approaches zero for every state realization, or block.
Clearly, the underlying assumption of a sufficiently large block-length allows
the construction of capacity achieving coding strategies, with error-probabilities
vanishing in the blocklength. An important utility in this case is the throughput, which is
the average data-rate over blocks. A single letter characterization in terms of the
block-wise MAC capacity regions is provided in \cite{HwMaGaCi07} for the
DM-MAC and fading AWGN MAC.
Averaging the possible communication rates over different state realizations will
result in the  adaptive capacity region in the presence of distributed CSI. In a Gaussian  
setting, the users can also adapt their power in addition to the rates \cite{GamalKim11}.
The maximal sum-throughput in this case is named as
the   power controlled adaptive sum-capacity (see Section~23.5.2 of \cite{GamalKim11}), 
which is  the maximal empirical average of the sum-rates achieved in each block.
In \cite{HwMaGaCi07}, \cite{GamalKim11}, it is also shown that the general 
power-controlled adaptive capacity region corresponds to 
a convex problem, and the sum-capacity is numerically determined for two-state
fading  MACs using convex programming techniques. The evaluation 
of adaptive sum-capacity for the popular Rayleigh fading is an open 
problem \cite{HwMaGaCi07}. 
We present  a solution to this problem in the current paper, 
which is almost closed form
(in terms of a water-filling formula), along with
several other interesting results. Recently, other approaches connecting random access
and MAC with distributed CSI has been reported~\cite{MiFrTs12}. 

Another related work considers rateless coding in a distributed MAC set-up \cite{Urs06}.
 Here, a set of users try to convey their respective message, but
 the slot (block) duration is not fixed. Each user is unaware of the link quality of other users or
even the number of users in the system. A communication round ends when all the messages
are successfully recovered at the receiver. The receiver employs a feedback beacon signal for 
synchronizing the rounds of communication. Interestingly, it is demonstrated that the distributed nature
of access results in negligible loss of throughput efficiency. As opposed to this, here we consider 
fixed slot-duration, and there is no feedback signal to the transmitters. Nevertheless,
we are still able to construct a  rate-splitting scheme, paving the way for  low complexity
throughput-optimal strategies, albeit in conjunction with an unconventional decoding
architecture.

For most parts of the paper we  consider identical channel statistics across users.
This is a reasonable assumption in many practical systems. The strategies that
we propose  have a simpler structure when the average powers are also the same 
across users.  We identify this special case by the name \emph{identical users}. More precisely,
throughout the paper, the usage \emph{identical users} is synonymous with the
following two constraints. 
 \begin{itemize} 
 \item the fading statistics are iid across users.
 \item each user has the same average power constraint.
 \end{itemize} 
%
While we state the results for arbitrary power constraints at the transmitters, we will first 
single out the identical users case, for the ease of exposition.

Our main results are summarized as follows:
\begin{enumerate}
 \item We introduce a fair, simple and distributed policy called the `alpha-midpoint' strategy for the
 Gaussian multiple-access block-fading channel. 
\item The alpha-midpoint strategy achieves the power controlled adaptive sum-capacity when the
	channel statistics are identical.
\item We  propose a low-complexity rate-splitting scheme that allows the alpha-midpoint strategy
 to be implemented through successive decoding.
\item  For identical users, we compute the power-controlled adaptive sum-capacity when each 
user is additionally given some symmetric finite-rate partial CSI of the other links. 
\item For non-identical channel statistics, an upper bound to the power-controlled adaptive 
sum-capacity is presented. 
\end{enumerate}

%
%
%


The organization of the paper is as follows. Section~\ref{sec:defn} will introduce 
the system model and some notations, and also  defines the notion of
\textbf{power-controlled adaptive sum-capacity}, which is our utility of interest. 
In Section~\ref{sec:midpoint} we introduce a communication strategy utilizing
the available individual CSI, called
the \emph{midpoint strategy}. To implement the communication strategy with 
low complexity successive cancellation, we propose a rate-splitting strategy in 
Section~\ref{sec:ratesplit},
along with an unconventional decoding architecture.
The power-controlled adaptive sum-capacity is evaluated in 
Section~\ref{sec:uneq}.
Section~\ref{sec:psi} extends our results to the case where
additional partial finite-rate side information on the other links is available at 
the transmitters. Bounds for non-identical channels are detailed in 
Section~\ref{sec:nonid}. We also describe some interesting connection between our model
and the so called L-out-of-K MAC (LOOK channel), this is done in Section~\ref{sec:connect}.  
Section~\ref{sec:conc} concludes the paper with a discussion of results and
possible extensions.   
\section{System Model and  Definitions\label{sec:defn}}
\label{sec:system}
Consider $L$ users communicating with a single receiver. 
These users transmit real-valued signals $X_i$,
 encountering real-valued fades $H_i$. If $Y$ is the value of the received signal at a (discrete) time
 instant we have
\begin{equation} \label{eq:one}
Y = \sum_{i}^L H_i X_i + Z
\end{equation}
 where $Z$ is an independent Gaussian noise process. The fading space
 $\mathcal{H}_i$ of the $i$-th user is the set of values taken by $H_i$,
 and the joint fading space $\mathcal{H}$ is the set of values taken by
 the joint fading state $\bar{H} = (H_1, H_2, \cdots, H_L)$.
 Similar vector quantities of user-wise parameters, like rate, power, channel
 state realization, will be denoted similarly, i.e., with an overbar symbol.

 We consider a slow-fading model, where each channel coefficient stays constant
within a block and varies across blocks in an i.i.d fashion. While we demand
reliable communication within each block, the utility that
we consider is the average sum-throughput, or average sum-capacity, where
the average is over different fading realizations or blocks. A more precise
definition of our utility is given later in this section. In Gaussian channels,
rate-expressions usually take a logarithmic form, and in this paper, all
logarithms are expressed to the base of $2$, that the rates we talk about 
are expressed as bits.

 
We assume that
  the (stationary and ergodic) fading processes $H_i$ are independent,
 and their distributions are known to all the transmitters and the receiver.
  In addition, we have \emph{individual} CSIT, i.e. each transmitter knows
 its own channel fading coefficient $H_i$ but that of no other. The receiver
 knows all the fading coefficients. The transmitters have individual
 average power constraints. i.e.
\begin{align}
\int_h P_i(h)d\Psi_i(h) \leq P_i^{avg},\,\, 1 \leq i \leq L,
\end{align}
where $\Psi_i(\cdot)$ is the cumulative channel law (cdf) of user~$i$.
The users can
 adapt the rate (and power) according to their own channel conditions.
 Apart from the notation changes, our model and objectives are similar in spirit to
 the those 
 presented in \cite{GamalKim11} (see Section~23.5), \cite{HwMaGaCi07}. In fact, 
 the terminology \emph{power-controlled adaptive sum-capacity} is borrowed from 
 these references, which we explicitly compute in the present paper.

 The adaptive nature of communication naturally leads to  the following notion 
 of a power-rate strategy.
\begin{definition}
 A power-rate strategy is a collection of mappings $(P_i, R_i) : \mathcal{H}_i \longmapsto 
  \mathbb{R}^+ \times \mathbb{R}^+$; $i=1,2,\cdots, L$.  Thus, in the 
 fading state $H_i$, the $i^{\textrm{th}}$ user expends power $P_i(H_i)$ and employs a codebook of
 rate $R_i(H_i)$. 
\end{definition}

 Let $C_{MAC}(\bar{h},\bar{P}(\bar{h}))$ denote the capacity region  of a  Gaussian 
 multiple-access channel   with fixed channel gains of $\bar h= h_1, \cdots, h_L$ and 
 respective power allocations  $\bar{P}(\bar{h})=(P_1(h_1), \cdots , P_L(h_L))$. We know that, 
\begin{align}
C_{MAC} (\bar{h},\bar{P}(\bar{h})) = \Biggl\{ \bar{R} \in \{\mathbb R^+\}^L : \forall S 
		\subseteq \{1,2, \cdots, L\},\, 
	 \sum_{i \in S} R_i \leq \frac{1}{2} \log 
 		\left( 1 + \sum_{i \in S} \hgi P_i (h_i) \right) \Biggr\}
\end{align}

\begin{definition}
 We call a power-rate strategy as \emph{feasible} if it satisfies the average power 
constraints for each user i.e.\\ $\forall i \in \{1, 2, \cdots, L\}, 
	\quad \mathbb{E}_{{H}_i} P_i(H_i) \leq P_i^{avg}$.  
\end{definition}
\begin{definition}
A power-rate strategy is termed as \emph{outage-free} if it never results in outage i.e. 
\begin{align*}
\forall \bar{h} \in \mathcal{H},  (R_1(h_1), \cdots, R_L(h_L)) 
		\in C_{MAC}(\bar h ,\bar P (\bar h))
\end{align*}
\end{definition}
Notice that the definition of outage-free implicitly makes use of the fact that
the blocklength  is sufficiently large for achievable strategies to reach acceptable
error-probabilities.  While a very large or infinite block-length may be required to drive
the error vanishingly small, we save on the notation by the simpler definition of 
outage-free as given above. 

Let $\Theta_{MAC}$ be the collection of all feasible power-rate strategies which are
outage-free. 
%
%
Let us now specialize the definitions to the case of identical channel statistics,
i.e. the cdf of each user is $\Psi(h)$.
For any strategy $\theta \in \Theta_{MAC}$, the throughput is
\begin{align} 
T_{\theta}  =       \sum_{i=1}^L \mathbb E R^{\theta}_i(H_i)  
 &= \sum_{i=1}^L \int_h R^{\theta}_i(h) \,  d\Psi(h) \nonumber \\  
 &=  \int_h d\Psi(h) \, \left(\sum_{i=1}^L R^{\theta}_i(h)\right) , \label{eq:tput:defn} 
\end{align}
where the superscript $\theta$ is used to identify the feasible power-rate strategy 
employed. i.e. $R^{\theta}_i(h)$ is the rate allocated to user $i$ while observing
fading coefficient $h$. The corresponding transmit power is denoted as $P^{\theta}_i(h)$.


\begin{definition} \label{def:sum:capa}
The \textbf{power controlled adaptive sum-capacity} is the maximum (average) throughput 
achievable, i.e. $C_{sum}(\Psi) =  \max_{\theta \in \Theta_{MAC}} T_{\theta}$.
\end{definition}
One of the main results of the paper is the computation of the power-controlled 
adaptive sum-capacity for several popular fading models. In the special case of
a single user channel ($L=1$), the adaptive sum-capacity is well known, as it
becomes a full CSI model. We denote the single user-capacity with an average power
constraint of $P_a$  as $C_1(\Psi,P_a)$, which can be evaluated using a water-filling formula
\cite{TseViswanath05}, 
\begin{align} \label{eq:tput:su}
C_1(\Psi,P_a)  = \frac 12 \int d\Psi(h) \log(1 + |h|^2 P^*(h)) ,
\end{align}
where 
\begin{align}\label{eq:tput:su:pc}
P^*(h) =  \left( \frac 1{\lambda} - \frac 1{|h|^2} \right)^+
 \text{ and }
\int d \Psi(h) P^*(h) = P_a.
\end{align}
The single user water-filling formula is considered to be closed form for all
practical purposes. Our results  for the MAC with distributed CSI also take
the form of similar water-filling formulas. Thus, we will re-use the notation
$C_1(\Psi, P_a)$ several times in this paper.  
Let us now focus on computing $C_{sum}(\Psi)$ for $L>1$.

\section{Power Controlled Adaptive Sum-capacity ($C_{sum}(\Psi)$)}
 Throughout this section, we consider identical fading statistics across 
 all users of a $L-$user MAC with distributed CSI. 
The main result
 of this section is  to compute the power-controlled adaptive sum-capacity
 $C_{sum}(\Psi)$ for arbitrary $\Psi(\cdot)$.
 We first
 state the main result and then explain its structure and implications, before
 providing the proof.   
\begin{theorem} \label{thm:mp:opt}
Given independent and identically distributed channels according to the  c.d.f $\Psi(h)$,
\begin{align} \label{eq:tput:alpha}
C_{sum}(\Psi)  = C_1(\Psi, \sum_{i=1}^L P_i^{avg}).
\end{align}
\end{theorem}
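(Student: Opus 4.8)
The plan is to prove the two inequalities $C_{sum}(\Psi)\le C_1(\Psi,P_{tot})$ and $C_{sum}(\Psi)\ge C_1(\Psi,P_{tot})$ separately, where I write $P_{tot}=\sum_{i=1}^L P_i^{avg}$ for the pooled power budget and $P^*(\cdot)$ for the single-user water-filling profile of \eqref{eq:tput:su:pc} at total power $P_{tot}$.

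For the converse, the first observation is that naively integrating the full sum-rate constraint (the subset $S=\{1,\dots,L\}$ of the outage-free condition) against the product fading law only yields the genie bound $T_\theta\le\E_{\bar H}[\tfrac12\log(1+\sum_i|H_i|^2P_i(H_i))]$, and this is strictly loose: it tacitly lets the rates be retuned per joint state, whereas each $R_i$ may depend on its own $h_i$ alone. The idea I would use instead is to invoke the outage-free sum constraint only at the \emph{diagonal} states $\bar h=(h,h,\dots,h)$, which belong to $\mathcal H$ for every $h$ in the support of $\Psi$; since outage-freeness is a pointwise requirement over all of $\mathcal H$, it binds there regardless of the (possibly zero) probability mass. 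At such a state the constraint reads $\sum_i R_i(h)\le\tfrac12\log(1+|h|^2\sum_i P_i(h))$. Writing $\bar R(h)=\sum_i R_i(h)$ and $\bar P(h)=\sum_i P_i(h)$, and using that all users share the marginal $\Psi$, the throughput becomes $T_\theta=\sum_i\E R_i(H_i)=\E[\bar R(H)]\le\E[\tfrac12\log(1+|H|^2\bar P(H))]$, while the summed per-user power constraints give $\E[\bar P(H)]\le P_{tot}$. The right-hand side is exactly the single-user water-filling objective at budget $P_{tot}$ evaluated at the feasible profile $\bar P$, hence at most $C_1(\Psi,P_{tot})$, which is the claimed bound.

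For achievability I would exhibit a single feasible, outage-free strategy meeting the bound, namely proportional water-filling: set $\alpha_i=P_i^{avg}/P_{tot}$, and let each user transmit with power $P_i(h)=\alpha_i P^*(h)$ and rate $R_i(h)=\alpha_i\cdot\tfrac12\log(1+|h|^2P^*(h))$. Feasibility is immediate since $\E[P_i(H)]=\alpha_i P_{tot}=P_i^{avg}$, and the throughput is $\sum_i\alpha_i\,C_1(\Psi,P_{tot})=C_1(\Psi,P_{tot})$ because $\sum_i\alpha_i=1$. The remaining point is outage-freeness: for every $\bar h$ and every $S$ I must verify $\sum_{i\in S}\alpha_i\tfrac12\log(1+x_i)\le\tfrac12\log(1+\sum_{i\in S}\alpha_i x_i)$ with $x_i=|h_i|^2P^*(h_i)$. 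This follows from Jensen's inequality for the concave map $x\mapsto\log(1+x)$ applied to the weights $\{\alpha_i\}_{i\in S}$ augmented by a mass $1-\sum_{i\in S}\alpha_i\ge0$ placed at the point $0$, together with $\log(1+0)=0$.

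The main obstacle is the converse, and specifically the realization that the \emph{useful} outage constraints are the measure-zero diagonal states rather than the full joint law; once that substitution is made, the problem collapses exactly onto single-user water-filling and the genie looseness disappears. On the achievability side the only technical care is verifying the polymatroid (all-subset) constraints simultaneously for every joint state, which the augmented-Jensen step dispatches in one line. I would expect the heavier machinery in the paper (the alpha-midpoint strategy, rate-splitting and successive cancellation) to be aimed not at the capacity value itself but at realizing such a rate tuple with a low-complexity decoder, and hence to be logically separable from the proof of the identity above.
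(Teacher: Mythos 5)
Your proposal is correct and follows essentially the same route as the paper: the converse is the paper's Lemma~\ref{lem:up:1} (the step marked $(a)$ there is precisely your diagonal-state evaluation, since with identical marginals the throughput collapses to a single integral over $h$ and the MAC sum-rate bound with received power $|h|^2\sum_i P_i(h)$ is applied pointwise, followed by the same relaxation to a sum-power constraint and single-user water-filling), and your achievability scheme with $\alpha_i = P_i^{avg}/P_{tot}$, $P_i(h)=\alpha_i P^*(h)$ is exactly the paper's alpha-midpoint strategy of \eqref{eq:alpha:achieve}--\eqref{eq:alpha:val}, with your augmented-Jensen step being the same concavity argument the paper uses to prove outage-freeness for every subset $S$. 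Your closing observation is also accurate: the rate-splitting and successive-decoding material in Section~\ref{sec:ratesplit} is about low-complexity implementation of this rate tuple and is logically separate from the capacity identity.
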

\label{sec:midpoint}
Before proving this, it is instructive to observe the structure of the result. 
The result states that
$C_{sum}(\Psi)$ is same as the capacity of a single user channel with cdf $\Psi(\cdot)$
and average power $\sum_{i=1}^L P_i^{avg}$.  It has an element of surprise in the
first look, as if there is some degeneracy in the problem statement. While this
is not the case, the single user result essentially comes from the fact that 
communication has to be outage-free in every block. The result is re-stating that
the worst joint distribution of the MAC fading-states is a highly correlated one,
in which the same fading value is observed across users. 

We propose a   distributed strategy which achieves $C_{sum}(\Psi)$,  termed as  
\textbf{alpha mid-point strategy} in this paper. In order to clearly present the ideas
behind this scheme, we first consider a special case of the model. The proof of 
Theorem~\ref{thm:mp:opt} 
is detailed  in Section~\ref{sec:uneq}. The special case that we consider now is
that of \emph{identical users}, i.e. users having identical channel statistics
and similar average powers.

\subsection{Identical Users and Mid-point Strategy}
 For identical users with individual CSI, TDMA is a natural scheme for communication.
 Let us first compute the achievable rates for TDMA and then consider possible
alternatives.   
\subsubsection{Plain TDMA}
In plain TDMA the transmitters employ a simple `taking turns' policy. Each block is divided into
 sub-blocks with only one user transmitting in that sub-block. This requires some extra
 coordination such as agreeing on an ordering for the users. The channels for the users are now
 \emph{orthogonal} and they may water-fill over their own sub-blocks to improve throughput. Thus,
 we obtain the power-rate strategy corresponding to plain TDMA as:
\begin{align*}
P_i(h_i) &= \left( \frac{1}{\lambda_i} - \frac{1}{|h_i|^2} \right)^+ \\
R_i(h_i) &= \frac{1}{2L} \log \left( 1 + L \hgi P_i(h_i) \right)
\end{align*}
where $\lambda_i$ is chosen such that $\mathbb{E}_{{H}_i}P_i(H_i) = P_i^{avg}$. The actual
 power employed by the user in its sub-block is $LP_i(H_i)$ and the full transmission rate
 supported thereby is chosen.
\subsubsection{The Midpoint Rate Strategy}
We now present an appealing alternative to TDMA, which also has some advantages over TDMA.
This strategy is  distributed in nature, and we call it the \emph{midpoint} strategy.
\begin{figure}
\begin{center}
\begin{tikzpicture}[>=stealth,scale=1.0]
\draw[->] (0,0) --++(0,4.75) node[left]{$R_1$};
\draw[->] (0,0) --++(5,0) node[below]{$R_2$};
\draw[red, line width=1.5pt] (0,4) --++(2,0) --++(2,-2) --++(0,-2);
\draw[blue, line width=1.5pt] (0,1) --++(0.5,0) --++(0.5,-0.5) --++(0,-0.5);
\draw[gray, line width=1.5pt] (0,1) --++(3.25,0) --++(0.75,-0.75) --++(0,-0.25);
\draw[dashed] (0.75,0.75) --++(2.25,0) node[below]{$A$} --++(0,2.25);
\end{tikzpicture}
\caption{The users $1$ and $2$ construct the innermost and outermost   MAC capacity regions
 respectively. The intermediate pentagon is the instantiated(actual) MAC region and $A$ denotes the
 operating point\label{fig:midptexp}}.
\end{center}
\end{figure}
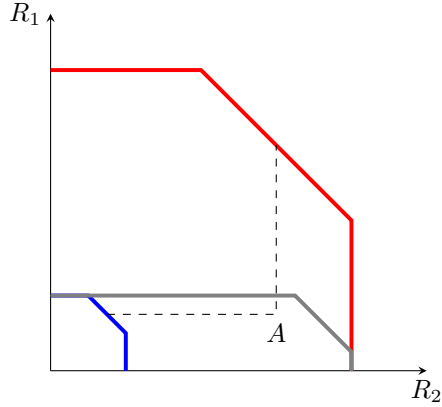
Consider a block  in which the  fading vector is $h_1, h_2, \cdots, h_L$ and let
the  respective powers be $P_1(h_1), P_2(h_2), \cdots, P_L(h_L)$
(as part of  some  feasible power
 strategy). Each user assumes that all others are identical to itself (in terms of fading
coefficients and transmit powers) and constructs the
 symmetric MAC region based on this assumption. It then chooses the maximal equal-rates point
 for  operation. Thus we have
\begin{equation} \label{eq:mid:rate}
R^{mid}_i(h_i) = \frac{1}{2L} \log \left( 1 + L \hgi P_i(h_i) \right).
\end{equation}
\begin{lemma}
The midpoint rate strategy is outage-free, i.e.
\begin{equation*}
\forall \bar{h} \quad \bar{R}_i^{mid} \in C_{MAC}(\bar{h}, \bar{P}).
\end{equation*}
\end{lemma}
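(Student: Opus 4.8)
The plan is to verify the outage-free property by direct inspection of the polymatroid constraints that define $C_{MAC}(\bar h, \bar P)$. Fixing a block with fading vector $\bar h$ and powers $P_i(h_i)$, I would abbreviate $a_i = \hgi P_i(h_i) \ge 0$. By the definition of the capacity region, $\bar R^{mid} \in C_{MAC}(\bar h, \bar P)$ holds exactly when, for every subset $S \subseteq \{1,2,\dots,L\}$,
\begin{equation*}
\sum_{i \in S} \frac{1}{2L} \log\left(1 + L a_i\right) \le \frac{1}{2} \log\left(1 + \sum_{i\in S} a_i\right).
\end{equation*}
So the whole lemma reduces to establishing this one family of scalar inequalities, and the rest is a convexity argument.

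The key observation is that this inequality is a packaged form of the concavity of $x \mapsto \log(1+x)$, delivered through Jensen's inequality. First I would cancel the common factor $\tfrac12$ and rewrite the claim as $\tfrac{1}{L}\sum_{i\in S}\log(1+La_i) \le \log(1 + \sum_{i\in S}a_i)$. The obstacle at this point is a dimensional mismatch: the left-hand side looks like an average over $|S|$ terms but is normalized by $L$, which in general exceeds $|S|$, so Jensen does not apply verbatim. The crux of the argument is to resolve this by padding: extend the sum to all $L$ indices by setting $a_i = 0$ for $i \notin S$. Because $\log(1 + L\cdot 0) = 0$, this padding changes neither the left-hand sum nor the quantity $\sum a_i$ on the right, yet it converts the left side into a genuine uniform average of $L$ terms.

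With the padding in place, I would apply Jensen's inequality to the concave function $f(x) = \log(1+x)$ with uniform weights $1/L$:
\begin{equation*}
\frac{1}{L}\sum_{i=1}^{L} f(La_i) \;\le\; f\left(\frac{1}{L}\sum_{i=1}^{L} L a_i\right) \;=\; f\left(\sum_{i=1}^{L} a_i\right) \;=\; \log\left(1 + \sum_{i\in S} a_i\right),
\end{equation*}
where the final equality uses that the padded terms vanish, so $\sum_{i=1}^{L} a_i = \sum_{i\in S} a_i$. Since $S$ was arbitrary, every subset constraint is met simultaneously, which is precisely the outage-free condition $\bar R^{mid}_i \in C_{MAC}(\bar h, \bar P)$ for all $\bar h$.

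I expect the only genuinely nontrivial step to be the zero-padding reconciliation of $|S|$ with $L$; once that normalization is set up correctly, the concavity of the logarithm does all the remaining work. It is worth noting that this argument is completely independent of the identical-users assumption and of any power-feasibility constraint, since it treats each $a_i = \hgi P_i(h_i)$ as a fixed nonnegative number in the given block; the feasibility and throughput-optimality of the strategy are separate matters handled elsewhere.
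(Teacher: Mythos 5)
Your proof is correct, and it rests on the same underlying fact as the paper's --- concavity of $x\mapsto\log(1+x)$ --- but it executes the subset inequality by a different decomposition. The paper resolves the mismatch between the normalization $L$ and the subset size $|S|$ in two stages: first the per-term bound $\frac{1}{2L}\log(1+L a_i)\le \frac{1}{2|S|}\log(1+|S| a_i)$ (implicitly using that $m\mapsto \frac1m\log(1+m a_i)$ is nonincreasing, i.e.\ the chord slope of a concave function through the origin decreases), and then Jensen with uniform weights $1/|S|$ to obtain $\frac{1}{2|S|}\sum_{i\in S}\log(1+|S|a_i)\le \frac12\log\left(1+\sum_{i\in S}a_i\right)$. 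Your zero-padding collapses both stages into a single application of Jensen at scale $L$: since $\log(1+L\cdot 0)=0$, extending the sum to all $L$ indices changes neither side, and $\frac1L\sum_{i=1}^L\log(1+La_i)\le\log\left(1+\sum_{i=1}^L a_i\right)$ finishes the argument in one step. What your route buys is economy: you never need the separate monotonicity fact, which the paper invokes without spelling out. What the paper's route buys is the visible intermediate bound at scale $|S|$, showing the midpoint rates in fact lie inside the tighter $|S|$-user symmetric region, not merely the $L$-user one. Your closing observation is also accurate and worth keeping: the lemma is a pointwise statement about fixed nonnegative numbers $a_i=|h_i|^2P_i(h_i)$ in a given block, so it uses neither identical statistics nor power feasibility --- which is consistent with the paper reusing essentially the same computation later for the alpha-midpoint strategy with arbitrary weights.
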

\begin{proof}
The lemma follows directly from the concavity of the logarithm function, i.e.
 $\forall S \subset \{1, 2, \cdots, L\}$, 
\begin{align*}
\quad \sum_{i \in S}
         R^{mid}_i(h_i)  &= \frac{1}{2L} \sum_{i\in S} \log \left( 1 + L  \hgi P_i(h_1) \right)\\
	&\leq \frac{1}{2|S|} \sum_{i\in S} \log \left( 1 + |S| \hgi P_i(h_i) \right)\\
	&\leq \frac{1}{2} \log \left( 1 + \sum_{i \in S} \hgi P_i(h_i) \right).
\end{align*}

\end{proof}
As $P_1, \cdots, P_L$ are arbitrary, the users' power strategies can now be
decoupled. The best power strategy for each user would thus be to
 water-fill over its own channel and we obtain
\begin{equation} \label{eq:pow:alloc}
P_i(h_i) = \left( \frac{1}{\lambda_i} - \frac{1}{|h_i|^2} \right)^+
\end{equation}
where $\lambda_i$ is chosen such that $\mathbb{E}_{H_i}P_i(H_i) = P_i^{avg}$.
%
In addition to its apparent simplicity, it also turns out that  the \emph{midpoint} strategies 
achieve the power-controlled adaptive sum-capacity
for identical users - a special case of Theorem~\ref{thm:mp:opt}. 

 Notice that the throughput achieved by the midpoint strategy is identical to that achieved
 by plain TDMA. We compare this in Figure \ref{fig:PCcomparison} with the optimal opportunistic-TDMA
 (O-TDMA)
 possible in the presence of complete CSIT \cite{KnoppHumblet95}. The figure shows the
throughputs of the two schemes for a normalized Rayleigh fading channel, corrupted with
AWGN of unit gain. The users are assumed to have identical average powers. 
 The advantage of plain 
 TDMA over the midpoint strategy is its simplicity in decoding, since
 only $L$ single-user decoders are needed. However, the price for this is paid in the extra
 coordination required to set up an ordering for transmission between users. The midpoint
 strategy avoids this coordination, albeit at the cost of incurring joint decoding. We show
 in the next section that this cost can be ameliorated through rate-splitting and
 successive decoding.
\begin{figure}
\begin{center}
\includegraphics[scale=0.7]{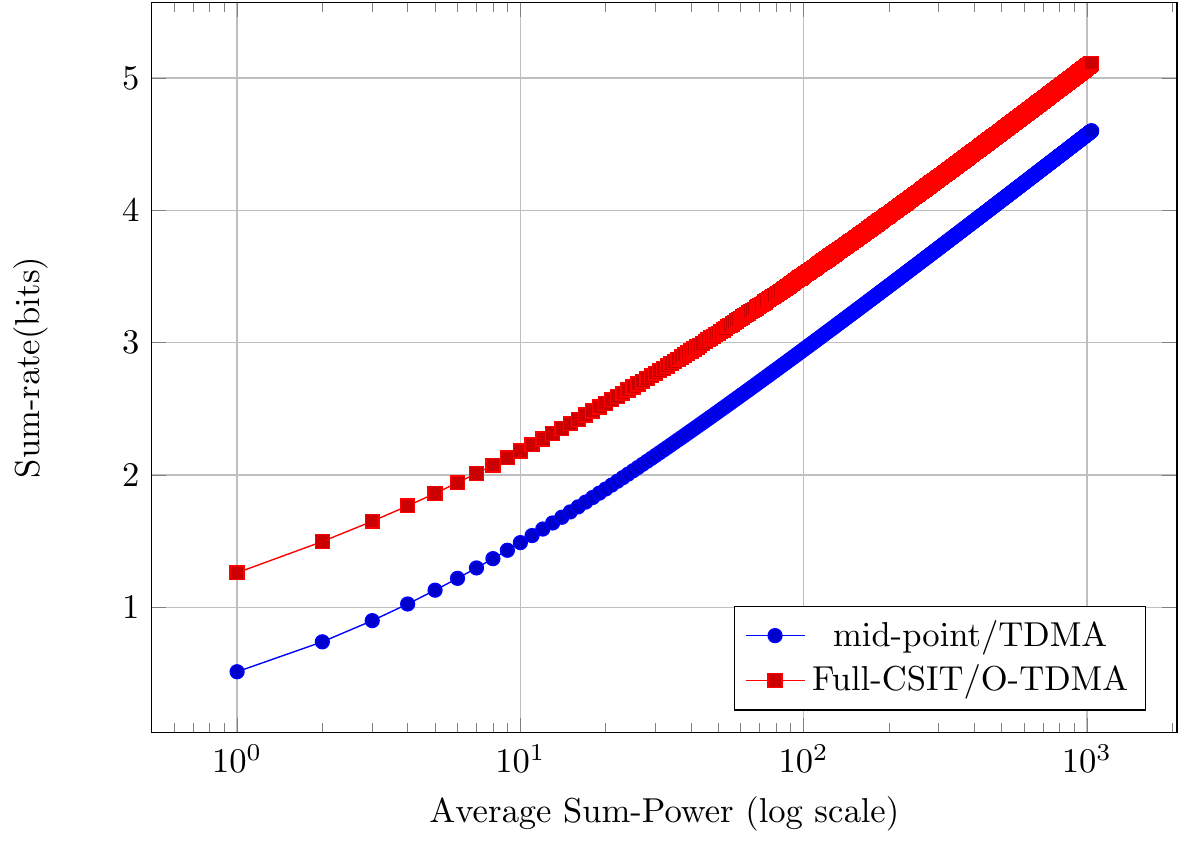}
\caption{The midpoint strategy is only a constant off the full CSI bound \cite{KnoppHumblet95}
 \label{fig:PCcomparison}}
\end{center}
\end{figure}

\subsection{Unequal Average Powers} \label{sec:uneq}
We now prove Theorem~\ref{thm:mp:opt} in two steps.  First, 
we construct an upperbound to $C_{sum(\Psi)}$. The second step generalizes the
mid-point strategy to construct an achievable scheme which meets the proposed
upper-bound.
\subsubsection{\underline{An Upperbound}}
Let us consider a single link with cdf $\Psi(h)$, let  $\Theta_s(\tilde P)$ be the collection of all 
single-user power  allocation  schemes ($P_1(h)$) such that
\begin{align}
\int P_1(h) d\Psi(h) = \tilde P.
\end{align} 
Let $P_{sum} = \sum_{i=1}^L P_1^{avg}$, also recall the definition of throughput in \eqref{eq:tput:defn}.
\begin{lemma}  \label{lem:up:1} The throughput $T_{\theta}$ obeys,
$$T_{\theta} \leq C_{1}(\Psi,P_{sum}), \forall \theta \in \Theta_{MAC}.$$
\end{lemma}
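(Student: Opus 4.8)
The plan is to exploit the fact that the outage-free condition must hold for \emph{every} joint fading vector $\bar h \in \mathcal{H}$, including the fully correlated ``diagonal'' vectors in which all users experience the same fade. Because the channels are i.i.d., the support of $\bar H$ is a product of $L$ identical marginal supports, so every vector of the form $(h,h,\dots,h)$ with $h$ in the support of $\Psi$ is a legitimate fading realization. Evaluating the outage-free sum-rate constraint (the $S=\{1,\dots,L\}$ inequality in the definition of $C_{MAC}$) at such a diagonal vector yields, for each $h$,
\[
\sum_{i=1}^L R^{\theta}_i(h) \;\le\; \frac{1}{2}\log\left(1 + |h|^2 \sum_{i=1}^L P^{\theta}_i(h)\right).
\]

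First I would define the aggregated single-link quantities $R(h) := \sum_{i=1}^L R^{\theta}_i(h)$ and $P(h) := \sum_{i=1}^L P^{\theta}_i(h)$. The displayed inequality states precisely that $(P,R)$ is an outage-free power-rate strategy for a \emph{single} link with cdf $\Psi$. Next I would verify its average power: since every user has the same marginal law $\Psi$,
\[
\int P(h)\, d\Psi(h) = \sum_{i=1}^L \int P^{\theta}_i(h)\, d\Psi(h) = \sum_{i=1}^L \mathbb{E}_{H_i} P^{\theta}_i(H_i) \le \sum_{i=1}^L P_i^{avg} = P_{sum},
\]
using feasibility of $\theta$ in the last step, so $P \in \Theta_s(\tilde P)$ with $\tilde P := \int P\, d\Psi \le P_{sum}$. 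Finally I would translate the throughput, again exploiting that all marginals equal $\Psi$:
\[
T_{\theta} = \sum_{i=1}^L \int R^{\theta}_i(h)\, d\Psi(h) = \int R(h)\, d\Psi(h) \le \frac{1}{2}\int \log\!\left(1 + |h|^2 P(h)\right) d\Psi(h) \le C_1(\Psi,\tilde P) \le C_1(\Psi,P_{sum}),
\]
where the first inequality is the diagonal outage-free bound integrated against $\Psi$, the second is the definition of $C_1(\Psi,\tilde P)$ as the maximal single-user throughput under budget $\tilde P$ (the water-filling optimum \eqref{eq:tput:su}--\eqref{eq:tput:su:pc} dominates the particular allocation $P$), and the last uses monotonicity of $C_1(\Psi,\cdot)$ together with $\tilde P \le P_{sum}$.

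The substantive step is the first one: recognizing that the pointwise, all-$\bar h$ outage-free requirement can be collapsed to an average bound simply by testing it on the diagonal. This is exactly where the ``worst-case correlated fading'' intuition enters, and the single-user reduction works precisely because the common marginal $\Psi$ lets the per-user rate and power averages recombine into a single averaging against the same $\Psi$. The remaining steps are routine; the only mild care needed is the monotonicity of $C_1(\Psi,\cdot)$ in the power budget, which is immediate from the water-filling characterization since enlarging the budget can only enlarge the feasible set of power allocations.
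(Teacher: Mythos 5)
Your proof is correct and follows essentially the same route as the paper: the paper's step $(a)$ is precisely your diagonal evaluation, since writing $T_\theta = \int d\Psi(h)\,\bigl(\sum_i R_i^\theta(h)\bigr)$ and invoking the sum-rate bound with received power $|h|^2\sum_i P_i^\theta(h)$ amounts to testing the outage-free condition on the fully correlated state $(h,h,\dots,h)$, followed by the same relaxation to a sum-power constraint and single-user water-filling. Your write-up is in fact slightly more careful than the paper's, making explicit both why the diagonal states lie in $\mathcal{H}$ and the monotonicity of $C_1(\Psi,\cdot)$ needed to pass from $\tilde P \le P_{sum}$ to the final bound.
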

\begin{proof}
\vspace*{-0.25cm}
\begin{align} 
 T_{\theta} &\overset{(a)}{\leq} \frac 12 \int_h d\Psi (h)\, \log \left(1+|h|^2\sum_{i=1}^L 
	P^{\theta}_i(h)\right) \label{eq:up:bound0} \\
 &\leq \max_{\Theta_s(\sum P_i^{avg})}  \frac 12 \int d\Psi(h) \, \log \left(1+|h|^2 P(h)\right). 
	\label{eq:up:bound}
\end{align}
Here $(a)$ follows from \eqref{eq:tput:defn}, by applying the sum-rate upper bound on a 
MAC with  received signal power $\sum_i |h|^2 P^{\theta}_i(h)$.
The second inequality results from relaxing the individual power constraints to
 a single average sum-power constraint.  

It is clear that water-filling of the inverse fading gains is the optimal strategy in 
a point to point fading channel under an average power constraint. Thus the last
expression above is indeed $C_{1}(\Psi, P_{sum})$.
\end{proof}
 
\subsubsection{\underline{Alpha-midpoint strategy}}
The \textbf{alpha-midpoint strategy} is a generalization of the midpoint scheme that
we introduced earlier. 
Let $\bar \alpha$ be a vector of
non-negative values with $\sum_i \alpha_i = 1$.
In alpha-midpoint strategy, the rate chosen by user $i$ while encountering
a fading coefficient of $h_i$ is, 
\begin{align}\label{eq:alpha:achieve}
R^{\bar{\alpha}}_i(h_i) = \alpha_i \frac 12  \log \left( 1 
		+ |h_i|^2\frac{P_i(h_i)}{\alpha_i} \right) ,
\end{align}  
where $P_i(h_i)$ is the transmitted power, chosen such that
$$
\int P_i(h) d\Psi(h) = P_i^{avg}.
$$

\begin{lemma}
The alpha-midpoint strategy is outage-free.
\end{lemma}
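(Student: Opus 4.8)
The plan is to verify the defining inequalities of $C_{MAC}(\bar h,\bar P)$ directly, reducing everything to the concavity of $x\mapsto\log(1+x)$, exactly as in the proof of the (unweighted) midpoint lemma. Fix an arbitrary fading vector $\bar h$ and an arbitrary subset $S\subseteq\{1,\dots,L\}$. By the description of the MAC region, outage-freeness is equivalent to
\[
\sum_{i\in S} R^{\bar\alpha}_i(h_i)\;\le\;\tfrac12\log\Bigl(1+\sum_{i\in S}\hgi P_i(h_i)\Bigr)
\]
for every such $S$. Substituting the definition \eqref{eq:alpha:achieve} and cancelling the common factor $\tfrac12$, it suffices to prove
\[
\sum_{i\in S}\alpha_i\log\Bigl(1+\frac{\hgi P_i(h_i)}{\alpha_i}\Bigr)\;\le\;\log\Bigl(1+\sum_{i\in S}\hgi P_i(h_i)\Bigr).
\]

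First I would record the weights: set $\beta_S=\sum_{i\in S}\alpha_i$; since the $\alpha_i$ are non-negative and sum to $1$, we have $0\le\beta_S\le1$, and the leftover mass $1-\beta_S$ is likewise non-negative. The key device is to read the left-hand side as a genuine weighted average by attaching the missing mass $1-\beta_S$ to the point $0$: because $\log(1+0)=0$, this dummy term contributes nothing to the sum, while the weights $\{\alpha_i\}_{i\in S}\cup\{1-\beta_S\}$ now sum to exactly $1$. Applying Jensen's inequality to the concave $\phi(x)=\log(1+x)$ with these weights and values $x_i=\hgi P_i(h_i)/\alpha_i$ (and $x_\ast=0$ for the dummy) gives
\[
\sum_{i\in S}\alpha_i\log\Bigl(1+\frac{\hgi P_i(h_i)}{\alpha_i}\Bigr)\le\log\Bigl(1+\sum_{i\in S}\alpha_i\,\frac{\hgi P_i(h_i)}{\alpha_i}+(1-\beta_S)\cdot 0\Bigr)=\log\Bigl(1+\sum_{i\in S}\hgi P_i(h_i)\Bigr),
\]
which is the required bound.

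Equivalently, one may split the argument in two, which isolates what I expect to be the only real content: Jensen over the normalized weights $\alpha_i/\beta_S$ first yields $\sum_{i\in S}\alpha_i\log(1+\hgi P_i/\alpha_i)\le\beta_S\log\bigl(1+\tfrac1{\beta_S}\sum_{i\in S}\hgi P_i\bigr)$, after which one needs only the scalar inequality $\beta\log(1+t/\beta)\le\log(1+t)$ for $\beta\in(0,1]$, $t\ge0$. This is precisely the superadditivity of $\phi$ through the origin (for concave $\phi$ with $\phi(0)=0$ one has $\phi(\beta x)\ge\beta\phi(x)$ whenever $\beta\le1$), so it introduces no new difficulty. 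The main (essentially the only) obstacle is thus bookkeeping: correctly accounting for $\sum_{i\in S}\alpha_i\le1$ rather than $=1$, which the dummy-mass trick handles cleanly.

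Finally I would dispose of the degenerate case $\alpha_i=0$: there the rate \eqref{eq:alpha:achieve} is understood as $\lim_{\alpha\to0^+}\alpha\log(1+c/\alpha)=0$, so such a user is silent and its index can simply be dropped from $S$ without affecting either side of the inequality. Since $\bar h$ and $S$ were arbitrary, this establishes $\bar R^{\bar\alpha}\in C_{MAC}(\bar h,\bar P)$ for all $\bar h$, i.e.\ the alpha-midpoint strategy is outage-free.
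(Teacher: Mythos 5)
Your proof is correct and follows essentially the same route as the paper, which likewise dispatches the lemma by concavity of $x\mapsto\log(1+x)$ applied to the subset sum-rate constraints of $C_{MAC}(\bar h,\bar P)$. The only difference is thoroughness: your dummy-mass trick at $0$ (exploiting $\log(1+0)=0$ to handle $\sum_{i\in S}\alpha_i\le 1$ rather than $=1$) and the treatment of $\alpha_i=0$ make explicit the Jensen bookkeeping that the paper leaves implicit in its one-line appeal to concavity.
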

\begin{IEEEproof}
For any $S\subseteq \{1,2,\cdots, L\}$,
\begin{align}
\sum_{i\in S} R^{\bar{\alpha}}_i(h_i) &= \sum_{i\in S} \alpha_i \frac 12  
		\log \left( 1 + |h_i|^2\frac{P_i(h_i)}{\alpha_i} \right) \\
	&\leq \frac 12 \log\left( 1 + \sum_{i\in S} |h_i|^2{P_i(h_i)} \right),
\end{align}
by concavity of the logarithm. Clearly the chosen rate-tuple across users
is within $C_{MAC}(\bar h, \bar P(\bar h))$ for every block, ensuring that
there is no outage.
\end{IEEEproof}
We now show the optimality of alpha-midpoint schemes.
\begin{lemma} \label{lem:in:2}
\begin{align*}
\max_{\theta \in \Theta_{MAC}} T_{\theta} = C_1(\Psi, P_{sum})
\end{align*}
\end{lemma}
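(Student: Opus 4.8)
The plan is to close the equality by the standard achievability-plus-converse argument. The converse direction, $\max_{\theta \in \Theta_{MAC}} T_{\theta} \le C_1(\Psi, P_{sum})$, is already in hand from Lemma~\ref{lem:up:1}. Hence the entire burden of the proof falls on achievability: I must exhibit one feasible, outage-free strategy whose throughput attains $C_1(\Psi, P_{sum})$ exactly. I would search for it inside the alpha-midpoint family, whose members are already certified outage-free by the previous lemma, so that only feasibility and the throughput computation remain to be checked.

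The key construction is to couple the split vector $\bar\alpha$ to the power allocation so that the per-user penalty factor $\alpha_i$ cancels inside the logarithm of \eqref{eq:alpha:achieve}. Let $P^*(h) = \left(\frac{1}{\lambda} - \frac{1}{|h|^2}\right)^+$ be the single-user water-filling profile of \eqref{eq:tput:su:pc} associated with the pooled budget $P_{sum}$, so that $\int P^*(h)\, d\Psi(h) = P_{sum}$. I would then set
\begin{align*}
\alpha_i = \frac{P_i^{avg}}{P_{sum}}, \qquad P_i(h) = \alpha_i\, P^*(h).
\end{align*}
Feasibility is immediate: $\sum_{i=1}^L \alpha_i = \left(\sum_{i=1}^L P_i^{avg}\right)/P_{sum} = 1$, so $\bar\alpha$ is a valid split, and $\int P_i(h)\, d\Psi(h) = \alpha_i P_{sum} = P_i^{avg}$, so each individual average-power constraint is met with equality.

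It then remains only to evaluate the throughput. Substituting $P_i(h) = \alpha_i P^*(h)$ into \eqref{eq:alpha:achieve}, the $\alpha_i$ inside the logarithm cancels, leaving $R_i^{\bar\alpha}(h) = \alpha_i \frac12 \log(1 + |h|^2 P^*(h))$. Summing over users, integrating against $d\Psi$ as in \eqref{eq:tput:defn}, and using $\sum_i \alpha_i = 1$ yields
\begin{align*}
T_{\bar\alpha} = \sum_{i=1}^L \alpha_i \cdot \frac12 \int \log\!\left(1 + |h|^2 P^*(h)\right) d\Psi(h) = \frac12 \int \log\!\left(1 + |h|^2 P^*(h)\right) d\Psi(h) = C_1(\Psi, P_{sum}),
\end{align*}
which matches the converse and finishes the proof.

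The only genuinely clever step — the one place requiring insight rather than routine verification — is the coupling $P_i(h) = (P_i^{avg}/P_{sum})\, P^*(h)$. It is precisely this scaling that annihilates the $\alpha_i$ inside the log while still honoring each user's own power budget, so that the $L$ per-user rates reassemble into a single water-filling expression with the pooled power. Note also that this construction specializes to the \emph{midpoint} strategy of \eqref{eq:mid:rate} in the identical-users case $\alpha_i = 1/L$, giving the promised common generalization.
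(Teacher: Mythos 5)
Your proposal is correct and follows essentially the same route as the paper: the converse is delegated to Lemma~\ref{lem:up:1}, and achievability uses exactly the paper's construction $\alpha_i = P_i^{avg}/P_{sum}$ with $P_i(h) = \alpha_i P^*(h)$, so that the $\alpha_i$ cancels inside the logarithm of \eqref{eq:alpha:achieve} and the throughput telescopes to $C_1(\Psi, P_{sum})$ via $\sum_i \alpha_i = 1$. Your feasibility check and throughput evaluation match the paper's computation line for line, so there is nothing to add.
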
 
\begin{IEEEproof}
We will specialize our alpha-midpoint strategy to achieve $C_{1}(\Psi, P_{sum})$. 
To this end, choose for $1 \leq i \leq L$,
\begin{align} \label{eq:alpha:val}
\alpha_i = \frac{P_i^{avg}}{\sum_{i=1}^L P_i^{avg}} \text{ and } 
P_i(h) = \alpha_i P^*(h),
\end{align}
where $P^*(h)$ is given in \eqref{eq:tput:su:pc}, with $P_a$ replaced by $P_{sum}$. 
Notice that,
\begin{align*}
\int P_i(h) d\Psi(h) &= \alpha_i \int P^*(h) d\Psi(h) \\
		&= \alpha_i P_{sum} \\
		&= P_i^{avg}.
\end{align*}
Furthermore, by \eqref{eq:alpha:achieve}
\begin{align*}
\sum_{i=1}^L \int R^{\bar{\alpha}}_i(h_i) d\Psi(h_i) &= 
	\sum_{i=1}^L \frac{\alpha_i}2 \int \log(1+|h|^2\frac{P_i(h_i)}{\alpha})d\Psi(h) \\
	&= \sum_{i=1}^L \frac{\alpha_i}2 \int \log(1+|h|^2P^*(h))d\Psi(h) \\
	&= \frac 12 \int \log(1+|h|^2P^*(h))d\Psi(h)\left( \sum_{i=1}^L \alpha_i\right) \\
	&= C_1(\Psi, P_{sum})
\end{align*}
This completes the proof of Theorem~\ref{thm:mp:opt}.
\end{IEEEproof}


%

\section{Rate Splitting and Successive Decoding}
\label{sec:ratesplit}
The alpha mid-point strategies have a fairly simple structure. In this section, we show that
these strategies can be implemented by low complexity successive decoding architectures. 
To this end, 
we present an asymptotically optimal rate-splitting strategy that mitigates the requirement
 of joint decoding, replacing the joint decoder with $L N_v$ successive single-user decoders,
 where $N_v$ is a parameter, signifying the number of layers per user.  
 This section is motivated by the work in \cite{Urs06} and their
 technique is useful in showing the achievability. However, \cite{Urs06} considers a rateless
 scheme with variable coding block-lengths between rounds of  communication. The length of
 each round is determined by a feedback link from the receiver, which announces the
 next round via a beacon. Our scenario requires that the communication occur within a fixed
 block or time slot, and there is no assumption of such a feedback link.

 We will first construct rate-splitting schemes for \emph{identical users}. 
Extensions to arbitrary average powers is done in a separate subsection.
We will write the received signal power for user~$i$,
 i.e. $P_i |h_i|^2$ as simply $\gamma_i$, throughout this section.
 Assume that the users have different (received) powers
 $\gamma_1, \gamma_2, \cdots,  \gamma_{L}$. For simplicity, we will assume that the additive noise is of unit
 variance.  The values of $\gamma_i$ may change with each block of communication depending on the
 individual fading conditions. Each user is \emph{unaware}  of the fade values and
 transmit powers of the
 rest of the users and, consequently, the interference they may cause.

The encoding and decoding are done as: each user splits itself into $L$ virtual
users and splits its power, perhaps unequally, among these users. Each user can be
visualized as a `stack' of virtual users.
For decoding, we use a successive cancellation based single-user decoder,
which decodes one of the virtual users assuming all other virtual users which
are not yet decoded as Gaussian noise, see \cite{RimoldiUrbanke96} for the details.
More specifically, user $i$, having received power $\gamma_i$, splits its data stream in to $N_v$ virtual
users. This is done by allotting a power/rate pair $(\gamma_i^l,r_i^l)$ to the $l^{th}$ virtual user,
such that $\sum_l \gamma_i^l = \gamma_i$.
 The transmitter $i$ assumes that all other users are also at (received)
 power $\gamma_i$ and imagines identical power splitting strategies across all users. It
 then chooses  the rates $r_i^l$ by considering all the other virtual users in the same
 and lower layers as interference, i.e.,
 \vspace*{-0.0cm}
\begin{align} \label{eq:rate:split}
 r_i^l = \frac{1}{2} \log \left( 1 + \frac{\gamma_i^l}{1 + (L-1)\gamma_i^l  + 
L\sum_{j = 1}^{l-1} \gamma_i^j } \right).
\end{align}
%
\noindent However, in the actual setting, the interference encountered from  the other users are
substantially different from that accounted for in the denominator of \eqref{eq:rate:split}.
So a layer by layer decoding may fail, as the virtual users are not chosen according to the actual
channel conditions.  Surprisingly, it turns out that this can be compensated by not strictly
 adhering to a layer by layer decoding. In particular,
the receiver retains the freedom to decode the topmost hitherto undecoded
layer of \emph{any} transmitter,
irrespective of the number of layers which were already decoded. It is, in fact,
this freedom that allows the transmitters to choose the virtual rates without knowledge of
interference from the other users.

\begin{lemma}
 Assuming layer-wise rate allocation as per (\ref{eq:rate:split}),
it is always possible to find a virtual user
which can be decoded correctly, i.e. with arbitrarily small error probability.
\end{lemma}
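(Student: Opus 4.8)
The plan is to reduce the claim to an elementary averaging argument about the residual (undecoded) powers. First I would record the decodability criterion. Since each decoding step is single-user and treats every undecoded signal as Gaussian noise, the topmost undecoded layer $l$ of user $i$, carrying power $\gamma_i^l$ at rate $r_i^l$, can be recovered with vanishing error probability precisely when the \emph{actual} interference-plus-noise presented to the decoder does not exceed the \emph{imagined} value $1 + (L-1)\gamma_i^l + L\sum_{j=1}^{l-1}\gamma_i^j$ that was used to set $r_i^l$ in \eqref{eq:rate:split}. Because $\log$ is increasing, this reduces to the pure inequality $I_{\mathrm{actual}} \le I_{\mathrm{imagined}}$ on the interference terms alone, stripping away the noise unit and the signal power $\gamma_i^l$.

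Next I would set up bookkeeping for an arbitrary intermediate stage. Since the receiver always peels the topmost undecoded layer of some transmitter, at any stage each user $k$ has a current top-layer index $l_k$; write its residual power as $G_k = \sum_{j=1}^{l_k}\gamma_k^j$, its current top-layer power as $g_k = \gamma_k^{l_k}$, and the total residual power as $S = \sum_k G_k$. When the receiver attempts user $i$'s top layer, the genuine interference is user $i$'s own lower layers $G_i - g_i$ together with the full residual power $\sum_{k\ne i} G_k = S - G_i$ of everyone else, which telescopes to $I_{\mathrm{actual}}(i) = S - g_i$. The imagined quantity from \eqref{eq:rate:split}, by contrast, is $(L-1)g_i + L(G_i - g_i) = L G_i - g_i$.

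With these two expressions, the decodability criterion $I_{\mathrm{actual}}(i)\le I_{\mathrm{imagined}}(i)$ collapses to $S - g_i \le L G_i - g_i$, that is, simply $G_i \ge S/L$. Since $S/L$ is the average of $G_1,\dots,G_L$, the user with the largest residual power always satisfies $G_i \ge S/L$, so its top layer is decodable. I would finish by observing that the argument is stage-independent: after the decoded layer is subtracted, the residual powers only shrink, so the same averaging applies verbatim at each successive step until no layers remain.

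The heart of the matter, and the step I expect to need the most care, is the interference accounting in the second paragraph: one must verify that the true cross-user interference sums to exactly $S - g_i$ and that the self-imagined budget is exactly $L G_i - g_i$. The resulting punchline is conceptually clean — by pretending the other $L-1$ users are identical copies of itself at its own received power, each transmitter systematically \emph{over}-budgets interference, and it over-budgets most for the strongest residual user, whose slack is exactly what guarantees that at least one layer always clears. Degenerate layers with $g_i = 0$ carry zero rate and decode trivially, so they may be ignored throughout.
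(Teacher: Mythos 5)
Your proposal is correct and takes essentially the same route as the paper: your averaging criterion $G_i \ge S/L$, satisfied by the user of maximal residual power, is exactly the paper's choice $\kappa = \arg\max_k \sum_{j=1}^{l_k}\gamma_k^j$, and your identity $I_{\mathrm{actual}} = S - g_i \le L G_i - g_i = I_{\mathrm{imagined}}$ is precisely the paper's displayed chain of (in)equalities rewritten in residual-power notation. The $S/L$ reformulation is a tidy repackaging of the same argument, not a different one.
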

\begin{proof}
We prove this by induction. Assume that layers (virtual users) above $l_k$ have been decoded for the
$k^{\mathrm{th}}$ transmitter. Choose:
\begin{align*}
\kappa = \arg\max_{k} \sum_{j = 1}^{l_k} \gamma_k^j
\end{align*}
For user $\kappa$, the remaining interference for decoding layer $l_{\kappa}$:
\begin{align*}
1 +  \sum_{j=1}^{l_{\kappa}-1} \gamma_{\kappa}^j &+ \sum_{k \neq \kappa} \sum_{j = 1}^{l_k} \gamma_k^j  \\
&= 1 + \sum_{k =1}^{L} \sum_{j = 1}^{l_k} \gamma_k^j - \gamma_{\kappa}^{l_{\kappa}} \\
&\leq 1 + L \sum_{j = 1}^{l_{\kappa}} \gamma_{\kappa}^j - \gamma_{\kappa}^{l_{\kappa}} \\
&= 1+ \sum_{j=1}^{l_{\kappa}-1} \gamma_{\kappa}^j +  (L-1)\sum_{j=1}^{l_{\kappa}} \gamma_{\kappa}^j
\end{align*}
The inequality follows directly from the choice of $\kappa$. The RHS is the \emph{expected} interference
for the $l_{\kappa}$th virtual user of transmitter $\kappa$. Thus, as the actual interference is less
than the expected interference, this virtual user can be correctly decoded. In other words, the user
 with the `best' received SNR can always be chosen for decoding.
\end{proof}
\begin{theorem}
As $N_v \rightarrow \infty$ and $\forall j, l, \quad \gamma_l^j \rightarrow 0$ , the rate achieved by all the
users approach their midpoint rate.
\end{theorem}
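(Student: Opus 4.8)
The plan is to fix a single user $i$ with received power $\gamma_i$ and show that its aggregate rate $R_i=\sum_{l=1}^{N_v} r_i^l$ tends to the midpoint rate $R_i^{mid}=\frac{1}{2L}\log(1+L\gamma_i)$ of \eqref{eq:mid:rate} (recall $\gamma_i=P_i|h_i|^2$). The first move is a purely algebraic simplification of \eqref{eq:rate:split}. Introduce the cumulative virtual power $G_i^l=\sum_{j=1}^{l}\gamma_i^j$, so that $G_i^0=0$ and $G_i^{N_v}=\gamma_i$. Adding $1$ to the argument of the logarithm in \eqref{eq:rate:split} and combining the fractions, the denominator $1+(L-1)\gamma_i^l+LG_i^{l-1}$ equals $1+LG_i^l-\gamma_i^l$, while the numerator becomes $1+LG_i^l$. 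Hence each layer rate collapses to
\begin{align*}
r_i^l=\frac12\log\frac{1+LG_i^l}{1+LG_i^l-\gamma_i^l}=-\frac12\log\left(1-\frac{\gamma_i^l}{1+LG_i^l}\right).
\end{align*}
This exact rewriting is the crux: it removes the awkward $(L-1)\gamma_i^l$ term and re-expresses everything through the cumulative powers $G_i^l$, which will serve as a partition of $[0,\gamma_i]$.

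Next I would pass to the limit. Writing $x_l=\gamma_i^l/(1+LG_i^l)\le\gamma_i^l$, the hypothesis $\gamma_i^l\to0$ forces $x_l\to0$ uniformly in $l$, so the expansion $-\log(1-x)=x/\ln 2+O(x^2)$ applies with a constant uniform over all layers. Summing,
\begin{align*}
R_i=\frac{1}{2\ln2}\sum_{l=1}^{N_v}\frac{\gamma_i^l}{1+LG_i^l}+\mathcal{E},
\end{align*}
where the remainder obeys $|\mathcal{E}|\le C\sum_l x_l^2\le C\,(\max_l\gamma_i^l)\sum_l\gamma_i^l=C\,\gamma_i\max_l\gamma_i^l$. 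Since the total power $\sum_l\gamma_i^l=\gamma_i$ is held fixed while $\max_l\gamma_i^l\to0$, the remainder $\mathcal{E}\to0$.

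The leading term is precisely a right-endpoint Riemann sum of the continuous integrand $s\mapsto 1/(1+Ls)$ over $[0,\gamma_i]$, associated with the partition $0=G_i^0<G_i^1<\cdots<G_i^{N_v}=\gamma_i$ whose mesh $\max_l\gamma_i^l$ tends to $0$. Therefore it converges to
\begin{align*}
\frac{1}{2\ln2}\int_0^{\gamma_i}\frac{ds}{1+Ls}=\frac{1}{2L}\log(1+L\gamma_i)=R_i^{mid}.
\end{align*}
Combining the two displays gives $R_i\to R_i^{mid}$, and since $i$ was arbitrary the claim holds for every user. The only genuinely delicate point is the joint limit: one must verify that the two stated hypotheses ($N_v\to\infty$ and $\gamma_i^l\to0$) together ensure both that the Riemann partition refines (mesh $\to0$) and that the quadratic Taylor remainder vanishes. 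I expect this bookkeeping---controlling $\mathcal{E}$ uniformly and invoking uniform continuity of $1/(1+Ls)$ on the compact interval $[0,\gamma_i]$---to be the main, though routine, obstacle; the algebraic collapse of $r_i^l$ does all the real work.
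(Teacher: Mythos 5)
Your proof is correct and takes essentially the same route as the paper: a first-order expansion of the layer-rate logarithms in \eqref{eq:rate:split}, followed by identifying the resulting sum as a Riemann sum for $\int_0^{\gamma_i} \frac{ds}{1+Ls}$ over the partition defined by the cumulative powers, yielding $\frac{1}{2L}\log(1+L\gamma_i)$. The only distinction is that the paper outsources this expansion-plus-Riemann-sum step to Lemma 1 of \cite{Urs06}, whereas you make it self-contained through the exact telescoping identity $r_i^l=-\frac12\log\left(1-\frac{\gamma_i^l}{1+LG_i^l}\right)$ together with an explicit remainder bound of order $\gamma_i\max_l\gamma_i^l$ --- a tidier bookkeeping that also keeps the $\frac{1}{2\ln 2}$ constants straight, which the paper's displayed limit handles loosely.
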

\begin{IEEEproof}
Using \ref{eq:rate:split}, we have
\begin{align*}
R_i = \sum_{j = 1}^{N_v} \frac{1}{2} \log \left( 1 + \frac{\gamma_i^l}{1 + (L-1)\gamma_i^l  
                + L\sum_{j = 1}^{l-1} \gamma_i^j } \right)
\end{align*}

Under the given conditions, we can use the same method as in Lemma 1 of \cite{Urs06}  to show that:
\begin{align*}
\lim_{N_v \rightarrow \infty} R_i &= \lim_{N_v \rightarrow \infty} \sum_{j = 1}^{N_v} \frac{\gamma_i^l}{1 +
                 (L-1)\gamma_i^l  + L\sum_{j = 1}^{l-1} \gamma_i^j } \\
&= \frac{1}{2}\int_{0}^{\gamma_i} \frac{dy}{1+Ly} = \frac{1}{2L} \log \left( 1 + L \gamma_i \right)
\end{align*}
\end{IEEEproof}
Computational results in \cite{Urs06}  show that only a moderate number of virtual users $N_v$
         are sufficient to yield good performance. For completeness, the increase in achievable-rate
with respect to the number of virtual layers is shown in Figure~\ref{fig:layers:capa}.
\begin{figure}[htbp]
\begin{center}
\includegraphics[scale=0.9]{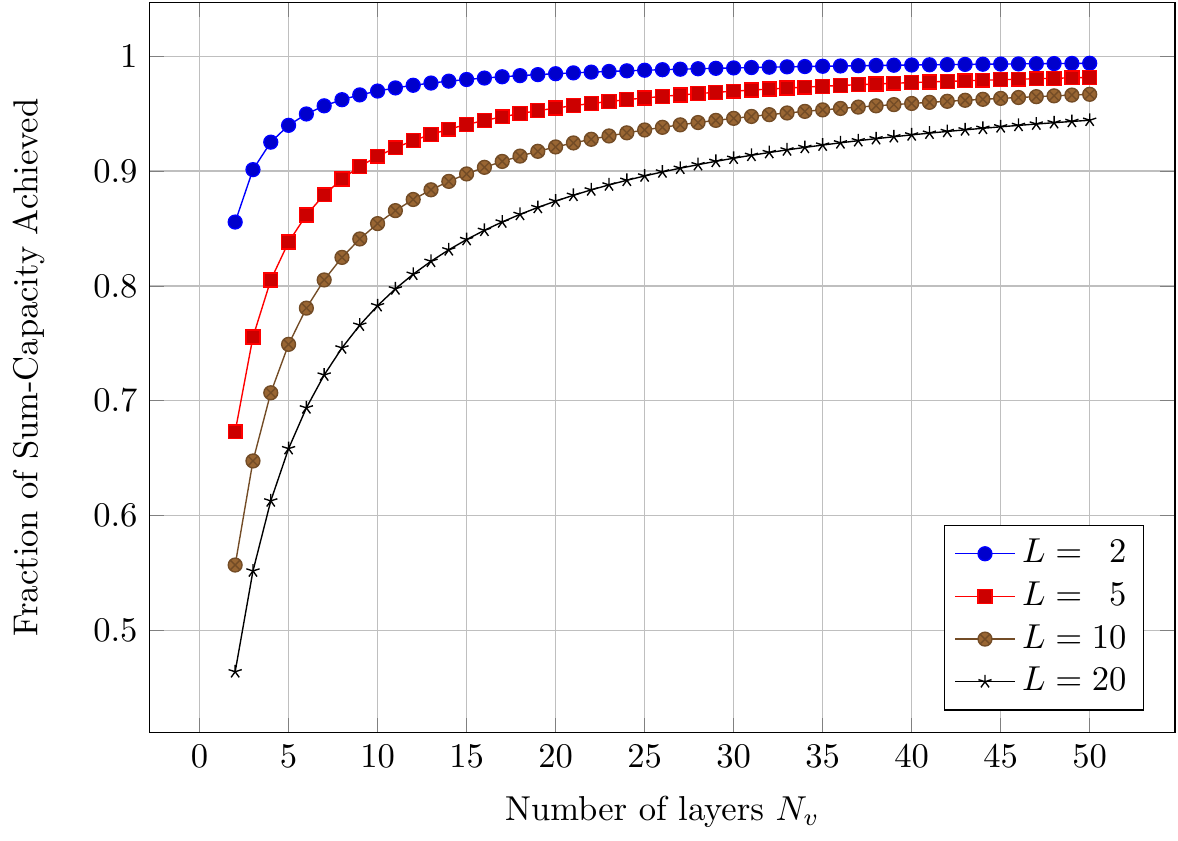}
\end{center}
\caption{The fractional sum-capacity achieved Vs  $N_v$\label{fig:layers:capa}}
\end{figure}

\subsection{Unequal Average Powers}
We will construct two levels of splitting in the presence of unequal average powers. 
In particular, we first split each user $k$ into  $N_k$ virtual users, in such a way
that  each virtual user has an identical\footnote{sometimes the transmit-power of 
users may not be
commensurate, however we can choose a \emph{slightly} lower power level for some of the
users, with negligible loss of performance.} average transmit power constraint of $P_v$, 
irrespective of the user index $k$. Thus,
$$
\sum_{i=1}^{N_k} P_v =  P_k^{avg}.
$$
Evaluating the maximal average rate for the $L^{\prime} = \sum_{k=1}^L N_k$ virtual users
under the midpoint strategy of \eqref{eq:mid:rate} will also yield $C_{1}(\Psi, P_{sum})$. 
To see this,  notice that the total-rate obtained by the $N_k$ layers of user~$k$ is
$$
N_k \frac1{L^{\prime}} \int \frac 12  \log(1 + |h|^2P^*(h)) d\Psi(h),
$$
where $P^*(h)$ is the single-user water-filling allocation with an average power of
$L^{\prime} P_v = \sum_{k=1}^L P_i^{avg} = P_{sum}$. Notice that $N_k/L^{\prime}$
is nothing but the $\alpha_k$ in \eqref{eq:alpha:val}, proving that the above
strategy can achieve the same rates as the alpha-midpoint scheme. 
Furthermore, since the midpoint rates are  achievable by single user decoding
techniques~\cite{DPD11}, alpha midpoint rates can also be achieved by low complexity
schemes.

\section{Finite-rate CSI on Other Links\label{sec:psi}}
Up to this point, we have assumed  only individual CSI. In this section,  we 
wish to study the effect of additional partial information about  the other links. 
To keep things simple, we consider
\emph{identical users} with the specified cdf $\Psi(h)$ and an average power of $P^{avg}$. 
Extensions to unequal average power constraints are possible, but not covered here. To start with,
we consider $1$ bit of additional partial CSI, i.e., each transmitter gets one bit of information from 
every other link, in addition to its own individual CSI. The individual fading components
are assumed to be independently chosen. The additional link-information 
given to other transmitters  is only a function of the
fading parameter of this link. Thus, the model captures situations where  the extra bit is
obtained through transmitter cooperation or cribbing. It is crucial that the receiver has no 
say on  the partial CSI. If the receiver decides the conveyed bit, then the throughput  is 
same as that of the full CSI~\cite{KnoppHumblet95}. 

The partial CSI contains link quality information: let us assume it to be chosen from the set
 $\{G,B\}$, where
we used $G$ for good and $B$ for bad. A natural separation between $G$ and $B$  is a link gain 
threshold. In particular, the partial CSI bit 
$\hat h_k$ of transmitter $k$ is
\begin{align}
\hat h_k = \begin{cases} 
	G \text{ if } |h_k| \geq h_{T} \\
	B \text{ otherwise,}
	\end{cases}
\end{align}
for some fixed positive threshold $h_T$. By slight abuse of notation, we will say that
link $j$ is in state $G$ (and call it good user), and denote the probability of that 
event by $\mu(G)$. Using the same
token, $1- \mu(G) = \mu(B)$. 
%
%
Let $C_{PSI} $ be the maximum attainable
throughput  with $1$ bit additional CSI on each of the other links, along with individual CSI.    
\begin{theorem}\label{thm:psi}
For $L$ identical users,
\begin{align}
C_{PSI} = C_1\left(\Psi^{\prime},LP^{avg}\right),
\end{align}
where the cdf $\Psi^{\prime}(\cdot)$ is such that,
\begin{align*}
d\Psi^{\prime}(h) = d\Psi(h) \left([\mu(B)]^{L-1} \indicator{h\in B}
			+ (1+\zeta)\indicator{h\in G} \right)
\end{align*}
and the parameter
\begin{align}
\zeta =
	\sum_{m=1}^{L-1} {L-1 \choose m} [\mu(B)]^m [\mu(G)]^{L-1-m}
		\frac m{L-m}.
\end{align}
\end{theorem}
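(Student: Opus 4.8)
The plan is to prove Theorem~\ref{thm:psi} with the same two-step template used for Theorem~\ref{thm:mp:opt}: exhibit a pattern-dependent generalization of the alpha-midpoint scheme that achieves $C_1(\Psi',LP^{avg})$, and then a matching converse mined from worst-case correlated realizations. Throughout I would invoke the permutation symmetry of identical users to restrict to symmetric strategies in which a user's power and rate depend only on its own fade $h$ and on the number $n_G$ of good users it can read off from the side-information bits (itself included if it is good); since the throughput is linear and the feasible set is convex and permutation-invariant, symmetrizing any strategy preserves the objective and feasibility.

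For achievability I would have each user play as follows. A \emph{good} user that sees $n_G$ good users in total treats those $n_G$ users as a symmetric sub-MAC and plays the alpha-midpoint point with $\alpha=1/n_G$, i.e.\ power $P^*(h)/n_G$ and rate $\tfrac{1}{2n_G}\log(1+|h|^2P^*(h))$, where $P^*(\cdot)$ is the single-user water-filling allocation for the measure $\Psi'$ at total power $LP^{avg}$. A \emph{bad} user stays silent unless all $L$ users are bad, in which case the $L$ of them run the ordinary midpoint with $\alpha=1/L$. Outage-freeness is immediate from the alpha-midpoint lemma, because in every block the $\alpha$'s of the \emph{active} users sum to exactly one (silent bad users contribute zero power and rate, so each subset constraint collapses to one among the active users). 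The two remaining checks are arithmetic: with $w_m=\binom{L-1}{m}[\mu(B)]^m[\mu(G)]^{L-1-m}$ the law of the number $m$ of bad partners of a good user, the per-user average power is $A\int_G P^*\,d\Psi+\tfrac{[\mu(B)]^{L-1}}{L}\int_B P^*\,d\Psi$ with $A=\sum_m w_m/(L-m)$, and the throughput has the identical shape with $\log(1+|h|^2P^*)$ in place of $P^*$. Both collapse to the stated formulas through the identity
\[
LA=\sum_{m=0}^{L-1}w_m\,\frac{L}{L-m}=\sum_{m=0}^{L-1}w_m\Bigl(1+\frac{m}{L-m}\Bigr)=1+\zeta,
\]
which together with the normalization $\int P^*\,d\Psi'=LP^{avg}$ shows simultaneously that the average power is $P^{avg}$ and that the throughput equals $C_1(\Psi',LP^{avg})$.

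For the converse I would first relax the $L$ individual power constraints to the single sum-power constraint $\sum_i\mathbb{E}P_i\le LP^{avg}$, exactly as in Lemma~\ref{lem:up:1}. Outage-freeness is then mined at correlated realizations: fixing a good/bad pattern with $n_G$ good and $n_B$ bad users and letting all good users share a common fade $g$ and all bad users a common fade $b$, I extract (i) the per-group midpoint bounds $R(g,n_G)\le\tfrac{1}{2n_G}\log(1+n_G|g|^2P(g,n_G))$ and its bad-group analogue, and (ii) the full sum-rate bound coupling the two groups. Substituting these into the symmetric throughput and optimizing the power profiles $\{P(h,n_G)\}$ by a Lagrangian/water-filling argument should reproduce $C_1(\Psi',LP^{avg})$; indeed, the pointwise stationarity condition already forces $P(h,n_G)=P^*(h)/n_G$, which is exactly the achievable allocation.

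The hard part will be handling the coupling in step~(ii). If one retains only the per-group midpoint bounds, the optimization decouples and assigns the good region the correct factor $1+\zeta$ but the bad region the \emph{wrong} factor $1+\zeta'$ instead of $[\mu(B)]^{L-1}$; moreover the allocation $P_B(h,n_B)=P^*(h)/n_B$ that this relaxation prefers is infeasible, since already a single good fade $g$ and a single bad fade $b$ violate $\tfrac12\log\!\bigl((1+|g|^2P^*(g))(1+|b|^2P^*(b))\bigr)\le\tfrac12\log(1+|g|^2P^*(g)+|b|^2P^*(b))$. Thus the entire content of the bad-region factor $[\mu(B)]^{L-1}$ lives in the full sum-rate constraint, which penalizes any bad-user activity in the presence of a good user. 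The delicate step is therefore to show that at the optimum it is never beneficial for a bad user to transmit unless \emph{every} user is bad --- equivalently, that an exchange/KKT argument reallocates all bad-user power into the all-bad pattern and onto the good users --- so that the bad region enters only through the event ``all $L-1$ partners bad,'' whose probability is precisely $[\mu(B)]^{L-1}$. Making this yielding argument rigorous, rather than the routine water-filling that follows once the bad users are shown to yield, is where the real work lies.
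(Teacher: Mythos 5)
Your achievability half is correct and is essentially the paper's scheme: good users split the water-filling allocation $P^*$ of the auxiliary channel $\Psi'$ equally among the $n_G$ good users and play the corresponding midpoint rates, bad users transmit only in the all-bad pattern, and your identity $LA=\sum_m w_m\,\tfrac{L}{L-m}=1+\zeta$ is exactly the bookkeeping that makes the per-user power equal $P^{avg}$ and the throughput equal $C_1(\Psi',LP^{avg})$. The paper verifies this only for $L=2$ and sketches $L>2$, so your general-$L$ arithmetic is a sound complement to its argument.

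The converse, however, has a genuine gap, and you have located it yourself: you never establish that at the optimum a bad user yields entirely whenever at least one user is good, and you concede that making this ``yielding'' argument rigorous is where the real work lies. Your plan --- mine the coupled sum-rate constraint at correlated realizations and settle the matter by a KKT/exchange analysis --- is left uncarried, and it is not routine: the same power profile $P(h,n_G)$ appears in sum-rate inequalities across all good/bad patterns simultaneously, and stationarity conditions alone do not obviously rule out solutions with positive bad-user power in mixed patterns. The paper sidesteps this crux with a different relaxation. On the mixed-pattern states it \emph{grants both transmitters full CSI} and replaces the individual power constraints by an average sum-power constraint on those states; under full CSI with a sum-power constraint, the classical result of \cite{KnoppHumblet95} says only the stronger user transmits, and since every good fade exceeds every bad fade by the threshold $h_T$, the bad user is silenced by fiat \emph{inside the upper bound} --- no exchange argument is ever needed. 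The paper then averages $P_1,P_2$ into $P_B,P_G$ by concavity of the logarithm and closes with an explicit time-sharing identification over the single-user channel $\Psi'$ (allocation $\hat P$ for a fraction $\mu(G)/(1+\mu(B))$ of the time and $\tilde P$ for the remaining $2\mu(B)/(1+\mu(B))$), showing $J^{**}\le C_1(\Psi',2P^{avg})$, which meets your achievability. If you want to repair your converse, adopt this information relaxation on the mixed patterns rather than fight the Lagrangian system; as written, your proof establishes only the achievability direction of the theorem.
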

\begin{IEEEproof}
Recall the definition of $C_{1}(\cdot, \cdot)$ given in 
\eqref{eq:tput:su}--\eqref{eq:tput:su:pc}. 
We explain the proof for $L=2$, which contains all the essential features.
The proof is relegated to  appendix~\ref{app:psi:proof}. 
\end{IEEEproof}
It is instructive to
compare the advantages of $1$ bit of extra CSI, which we demonstrate for a two user
\textit{identical} Rayleigh fading links of unit second moment, see Figure~\ref{fig:2}. The threshold value $h_T$ for $1-$bit CSI was 
taken as unity. 
\begin{figure}[htbp]
\begin{center}
\includegraphics[scale=1]{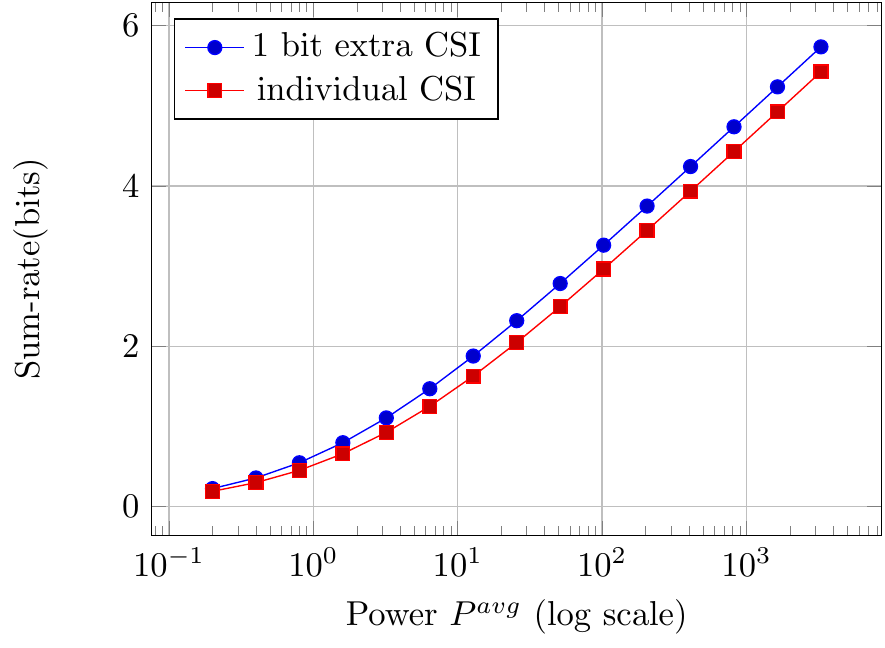}
\caption{Sum-rate improvement by additional CSI\label{fig:2}}
\end{center}
\end{figure}

One immediate question is the sensitivity of the results with respect to the fading threshold. 
We argue that it is not that crucial and the results are robust. In fact,
taking the median of the fading distribution seems to be natural choice for many models.
Numerical results show that even for moderate power-levels the difference from the
 the best choice of the threshold  is barely noticeable. 

Extending to multiple bits of CSI is straightforward when the users are identical and
the additional CSI bits are generated in a symmetric manner, i.e. $n-1$ identical threshold values
define the $\log_2 n$ bits of partial CSI about each link. For each $0\leq m \leq n$, all users who
 experience a link-fading in the threshold bracket 
 $[h_m,h_{m+1}), 0\leq m \leq n$ will form  group $m$, 
where $h_0=0$ and $h_{n+1} = \infty$. An optimal strategy is to let all the  
users in group $m$ transmit at their respective  midpoint rates,
 if there are no users in any group above $m$.

\section{Nonidentical Channel Statistics\label{sec:nonid}}
Our second result is a generalization to non-identical channel statistics. 
In this case, we do not know the optimal schemes, but we
provide an upperbound, which seems to be close for several 
channels of practical interest. W.l.o.g  consider non-negative
valued fading coefficients (by taking modulus), and
let the respective cdf of the individual channels 
be $\{F_1(\cdot), F_2(\cdot), \cdots, F_L(\cdot)\}$. We will assume each of them
to be right continuous and define the corresponding inverse functions as
\begin{align}
\forall \gamma \in [0,1],\, F^{-1}_k (\gamma) = \min h : F_k(h) \geq \gamma.
\end{align}  
For convenience, we will denote $F_k^{-1}(\cdot)$ by $h_k^2(\cdot)$. 
\begin{lemma}\label{lem:up:general}
For non-identical channels defined by the c.d.fs $F_k(\cdot),1\leq \!k \!\leq\! L,$ the 
maximal sum-rate $C_{sum}$ is bounded by
\begin{align} \label{eq:csimac:ub:1}
C_{sum} \leq \max_{\theta_{MAC}} \int_0^1\!\frac 12\!  \log\left( 1
		+\sum_{k=1}^L h_k^2(x) P_k[h_k(x)]\right) dx.
\end{align}
\end{lemma}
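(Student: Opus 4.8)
The plan is to bound the throughput of an arbitrary feasible, outage-free strategy $\theta$ by exploiting that outage-freeness is a \emph{pointwise} requirement over the whole joint fading space $\mathcal{H}$, and then to couple the $L$ fades comonotonically through a common quantile level. This is the non-identical analogue of the mechanism behind Lemma~\ref{lem:up:1}: the binding joint law places every user at the same percentile of its own distribution, and here that percentile is supplied by a single variable $x\in[0,1]$ feeding all the quantile functions $F_k^{-1}(\cdot)$.

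First I would rewrite each user's expected rate via the probability integral transform. Writing $h_k(x)$ for the fading of user $k$ at quantile level $x$, the change of variable $x=F_k(\cdot)$ gives $\mathbb{E}_{H_k}R^{\theta}_k(H_k)=\int_0^1 R^{\theta}_k(h_k(x))\,dx$ for each $k$. Summing over $k$ and recalling \eqref{eq:tput:defn},
\begin{align*}
T_{\theta}=\sum_{k=1}^L\int_0^1 R^{\theta}_k(h_k(x))\,dx=\int_0^1\sum_{k=1}^L R^{\theta}_k(h_k(x))\,dx.
\end{align*}

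Next I would apply the outage-free constraint at the comonotone states. Because the fades are independent, the joint support $\mathcal{H}$ is the product $\mathcal{H}_1\times\cdots\times\mathcal{H}_L$, so for every $x\in[0,1]$ the comonotone point $\bar h(x)=(h_1(x),\dots,h_L(x))$ lies in $\mathcal{H}$. Outage-freeness then forces the realized rate tuple into $C_{MAC}(\bar h(x),\bar P(\bar h(x)))$; taking $S=\{1,\dots,L\}$ in the sum-rate inequality defining $C_{MAC}$ yields, for every $x$,
\begin{align*}
\sum_{k=1}^L R^{\theta}_k(h_k(x))\le\frac12\log\left(1+\sum_{k=1}^L h_k^2(x)\,P^{\theta}_k[h_k(x)]\right).
\end{align*}
Integrating over $x\in[0,1]$ and combining with the previous display gives the bound for this $\theta$; since $\theta\in\Theta_{MAC}$ was arbitrary, taking the supremum over $\theta$ on both sides produces \eqref{eq:csimac:ub:1}.

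The main obstacle is justifying that the comonotone diagonal $\{\bar h(x):x\in[0,1]\}$ genuinely lies in $\mathcal{H}$ and is therefore subject to the constraint, even though it is a measure-zero set under the product law of the fades. The point I would stress is that outage-freeness is demanded for \emph{all} $\bar h\in\mathcal{H}$ and not merely almost everywhere, so measure-zero diagonals cannot be discarded; independence is precisely what supplies the product support that makes the comonotone coupling admissible. A secondary, purely technical, matter is the behaviour of $F_k^{-1}$ at atoms or flat stretches of $F_k$, but the probability integral transform identity used above remains valid for right-continuous cdfs under the stated definition of the inverse, so no extra care is required there.
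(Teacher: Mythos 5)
Your proposal is correct and follows essentially the same route as the paper: the paper's (much terser) proof partitions the range $[0,1]$ of each cdf into $n$ equal segments and imposes the MAC sum-rate constraint on the comonotone fading vector formed by the matching inverse maps $h_k(j/n)$, which is exactly your quantile coupling in discretized form. Your continuum version via the probability integral transform, together with the explicit observation that independence gives product support so the measure-zero comonotone diagonal is still subject to the pointwise outage-free constraint, is in fact a cleaner and more complete rendering of the argument the paper only sketches.
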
  
\begin{IEEEproof}
Imagine that the range of each cdf $F_k, 1\leq k\leq L$ in $[0,1]$ is 
divided into $n$ equal segments. Let the inverse map of the $j^{th}$ segment
of cdf $F_k$ be $h^2_k(j/n)$.  The lemma states that for each segment $j$, the
MAC formed by the corresponding inverse maps of this segment should obey the 
sum-rate constraint.  
\end{IEEEproof}
Notice that different channel values are coupled in the above bound (through
their cdf structure), and
we can maximize the power allocation on these coupled fading vectors,
thus obtaining a bound to the RHS of \eqref{eq:csimac:ub:1}.
%
%
By using Lagrange optimization as in \cite{KnoppHumblet95}, we get the following lemma.
\begin{lemma}  \label{lem:up:gen:2}
\begin{align} \label{eq:up:gen:2}
C_{sum} \leq \sum_{k=1}^L \int_0^1 \!\frac 12 \!\log(1 +  h_k^2(x) P_{k}(x)) \alpha_k(x) dx
\end{align}
where 
\begin{align*}
P_k(x) = \left(\frac 1{\lambda_k} - \frac 1{h_k^2(x)} \right)^+
\textrm{ and }
\int_0^1 P_k(x) 
		\alpha_k(x)dx = P_k^{avg}.
\end{align*}
In here, $\alpha_k(x)$ are non-negative functions such that 
$ \sum_{k=1}^L \alpha_k(x)=1, \forall x \in (0,1)$.

\end{lemma}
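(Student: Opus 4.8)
The plan is to treat the right-hand side of \eqref{eq:csimac:ub:1} as a concave maximization over power allocations and solve it explicitly by Lagrangian duality. After the change of variable $x = F_k(h)$ the individual average-power constraint $\int P_k(h)\,dF_k(h)=P_k^{avg}$ becomes $\int_0^1 P_k(x)\,dx = P_k^{avg}$ (writing $P_k(x)$ for $P_k[h_k(x)]$), so the quantity to be maximized is
\begin{align*}
J(\{P_k\}) = \int_0^1 \tfrac12 \log\Bigl(1 + \sum_{k=1}^L h_k^2(x) P_k(x)\Bigr)\,dx,
\end{align*}
over $P_k(x)\ge 0$ subject to $\int_0^1 P_k(x)\,dx \le P_k^{avg}$. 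The integrand is concave and nondecreasing in the received-power profile and the constraints are affine, so (with $P_k^{avg}>0$ furnishing a Slater point) strong duality holds and the maximum is attained; it therefore suffices to characterize the maximizer and evaluate $J$ there.

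First I would form the Lagrangian with a multiplier $\lambda_k\ge 0$ for each average-power constraint and differentiate pointwise in $x$, exactly as in the capacity computation of \cite{KnoppHumblet95}. Writing $S(x) = 1 + \sum_j h_j^2(x) P_j(x)$, the stationarity/KKT conditions read
\begin{align*}
\frac{h_k^2(x)}{2 S(x)} \le \lambda_k,\qquad \text{with equality whenever } P_k(x)>0.
\end{align*}
Hence at a.e.\ $x$ only the user(s) attaining $\max_k h_k^2(x)/\lambda_k$ can be active: power is allocated opportunistically to the user with the largest normalized gain $h_k^2(x)/\lambda_k$. I would encode this selection by nonnegative weights $\alpha_k(x)$ with $\sum_k\alpha_k(x)=1$, taking $\alpha_k(x)\in\{0,1\}$ off the (generically null) tie-sets and splitting the weight on ties. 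On the set where user $k$ is selected the problem collapses to a single-user water-filling, giving $P_k(x) = (1/\lambda_k - 1/h_k^2(x))^+$ once the factor $2$ is absorbed into $\lambda_k$.

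Substituting this structure back, the active-user objective at each $x$ reduces to a single log term, which can be written uniformly as $\sum_{k=1}^L \alpha_k(x)\,\tfrac12\log(1 + h_k^2(x)P_k(x))$ because $\alpha_k(x)$ vanishes precisely where user $k$ is silent; integrating in $x$ yields the right-hand side of \eqref{eq:up:gen:2}. The multipliers $\lambda_k$ are then pinned down by the complementary-slackness/average-power conditions $\int_0^1 P_k(x)\alpha_k(x)\,dx = P_k^{avg}$, which is exactly the normalization stated in the lemma. Chaining this identity with Lemma~\ref{lem:up:general} gives $C_{sum}\le J^\star$, the claimed bound.

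The main obstacle is the rigorous justification of the single-user-at-a-time structure and the introduction of the sharing functions $\alpha_k(x)$: one must argue that the tie-sets $\{x: h_k^2(x)/\lambda_k = h_{k'}^2(x)/\lambda_{k'}\}$ carry no mass for the optimizing multipliers (so that $\alpha_k$ is essentially an indicator), or otherwise define a consistent fractional split there, and confirm that the $\lambda_k$ solving the $L$ coupled water-filling/power-budget equations exist. The measurability of the selection map and the interchange of the maximization with the pointwise-in-$x$ optimization are routine once strong duality is secured, so the selection and tie-breaking argument is where the real care is needed.
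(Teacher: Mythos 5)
Your proposal is correct and follows essentially the same route as the paper's proof in Appendix~\ref{app:non:ident}: starting from Lemma~\ref{lem:up:general}, relax so the powers are free functions of the c.d.f.\ index $x$, apply Lagrangian/KKT conditions as in \cite{KnoppHumblet95} to conclude that only users attaining $\max_k h_k^2(x)/\lambda_k$ are active with water-filling powers, and introduce sharing weights $\alpha_k(x)$ to split the degenerate optimum so that the budgets $\int_0^1 P_k(x)\alpha_k(x)\,dx = P_k^{avg}$ pin down the multipliers. The only cosmetic difference is that the paper packages the final rewriting via the time-sharing Proposition~\ref{pro:tdm:opt}, interpreting $\alpha_k(x)$ as TDMA fractions, whereas you use the equivalent algebraic identity that each active user's single-user log term equals the sum-rate term; your caveat about tie-sets is apt, since for identical laws the tie-set is all of $(0,1)$ and it is the fractional split, not a null-set argument, that makes the per-user budgets feasible.
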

\begin{IEEEproof}
Maximizing \eqref{eq:csimac:ub:1}  over all coupled channel vectors will yield the bound in
\eqref{eq:up:gen:2}. 
The detailed proof is available in  appendix~\ref{app:non:ident}.  
\end{IEEEproof}

While the existence of $\lambda_k$ and the functions $\alpha_k(x)$
is enough for the proof,   numerical algorithms are required to find
these, except for special cases. One such case where the algorithm is 
straightforward is when the channel coefficients are generated by the same law, but scaled by
different average gains. In this case, $\alpha_k(x) = P_k^{avg}/\sum P_k^{avg}$ for all
$x\in (0,1)$ and
the water-filling formula can be evaluated  using single-user water-filling.
For example, consider a $2-$user Rayleigh faded  MAC, with $E|\textbf h_2|^2 = 
2 E|\textbf h_1|^2 = 2$, and  $P_1^{avg} = P_2^{avg}$. Figure~\ref{fig:one} compares our 
upperbound against the rates  resulting from an adaptation of the 
alpha-midpoint strategy. The lower bound  is 
obtained by considering a symmetric Rayleigh fading channel with $E|h_1|^2 = E|h_2|^2 = 2$
and $\alpha_1 = \frac 13$ and $\alpha_2= \frac 23$. For this strategy, 
the sum-power is taken as $P_1^{avg} + P_2^{avg}$.
\begin{center}
\begin{figure}[htbp]
\begin{center}
\includegraphics[scale=1]{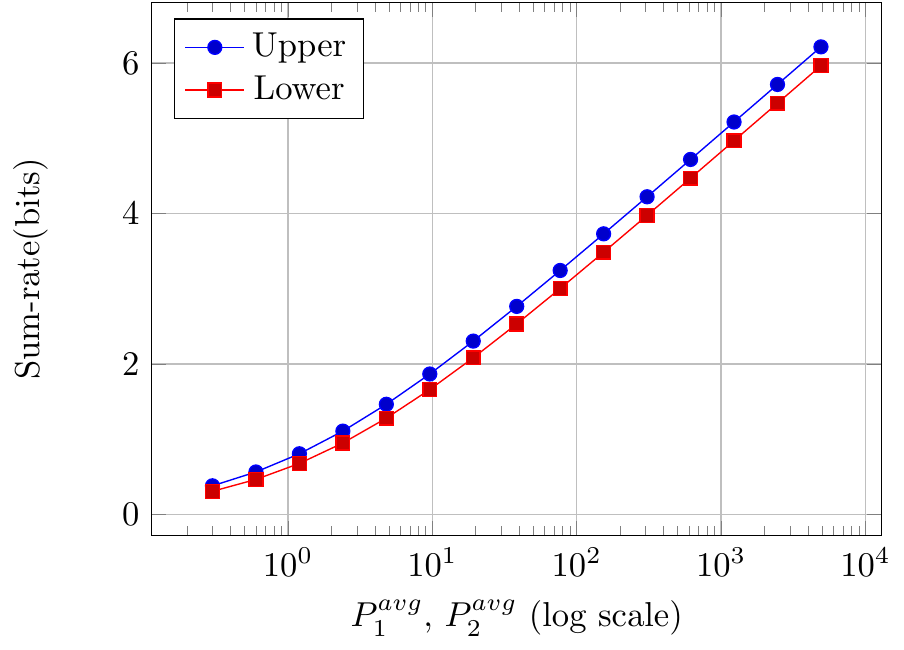}
\caption{Upper and Lower bounds to adaptive sum-capacity\label{fig:one}}
\end{center}
\end{figure}
\end{center}
This not only demonstrates the utility of our upper-bound, but also that 
 the alpha mid-point strategy is a good scheme.


\section{Connections with the LOOK Channel}\label{sec:connect}
We will show  the suitability of our strategies in presence of individual CSI to 
the so called LOOK MAC setting.  
The LOOK channel is a multiple access model which models the variability of the
active user set. In particular, of the $K$ available transmitters, at most $L$ are active in 
a communication epoch. The communication scheme should accommodate any allowed choice
of the active user set. The capacity region of a discrete LOOK MAC model is known,
see \cite{AlsuCheng95} for history and other details. 
Specifically, 
a $K-$ dimensional rate-vector is termed as \emph{achievable},
 if for every subset $S$ of indices having
cardinality at most $L$, the rate-coordinates of $S$  belong  to the
corresponding $|S|-$ user capacity region. Notice that in the discrete memoryless case, 
the capacity region is non-enlarging with $K$, and it can possibly decrease due to the 
additional requirements of the incoming users.
Furthermore, strategies like time-sharing are not viable here, as there is no proper
coordination between the users. In reality, the users may not even know the
identity of each other. Computation of the capacity region for even moderate $K$ seems
infeasible due to the curse of dimensionality. 

We will now consider the Gaussian  LOOK channel and extend it to a block-fading setup.
In a block-fading LOOK channel, each communication epoch corresponds to a block. There
are $K$ users in the system, of which $L$ are chosen uniformly at random. 
Communication has to be outage-free in every block 
or epoch. There is individual CSI at the transmitters and
we consider identical users with power $P^{avg}$ and fading cdf $\Psi(h)$.
The sum throughput $C_{LOOK}$ for this system is
given by, 
\begin{theorem}
\begin{align}
C_{LOOK} = \frac 12 \int \log\left(1 +  L h^2 P(h)\right) d\Psi(h),
\end{align}
where
$$
P(h) = \left(\frac 1{\lambda} - \frac 1{|h|^2} \right)^+ \text{ and } \int P(h) d\Psi(h) = \frac KL P_{avg}.
$$ 
\end{theorem}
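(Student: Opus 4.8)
The plan is to prove the two inequalities $C_{LOOK} \le V$ and $C_{LOOK} \ge V$, where $V = \frac 12 \int \log(1 + L h^2 P(h))\, d\Psi(h)$ is the claimed value. The central observation is that the block-fading LOOK channel reduces to the $L$-user midpoint analysis behind Theorem~\ref{thm:mp:opt}, with one twist: since each of the $K$ users is active in only a fraction $L/K$ of the blocks, its per-block power budget (conditioned on being active) is inflated to $\frac KL P_{avg}$, so that the long-run average power $\frac LK \int P(h)\, d\Psi(h)$ equals $P_{avg}$.

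For achievability, I would let every active user run the $L$-user midpoint strategy of \eqref{eq:mid:rate}: a user seeing fade $h$ spends power $P(h)$ and transmits at rate $\frac{1}{2L}\log(1 + L h^2 P(h))$, where $P(\cdot)$ is water-filled to meet $\int P(h)\, d\Psi(h) = \frac KL P_{avg}$. Because exactly $L$ users are ever active, the midpoint lemma (concavity of the logarithm) guarantees that the chosen rate tuple lies in the instantiated $L$-user MAC region for every realization of the active fades, so communication is outage-free in each block irrespective of which size-$L$ subset is scheduled and of the other (unknown) users' fades. Averaging the sum-rate over blocks, each of the $L$ active users contributes an expected rate $\frac{1}{2L}\int \log(1 + L h^2 P(h))\, d\Psi(h)$, so the sum throughput equals $V$.

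For the converse, I would exploit that a user cannot know the active set nor the others' fades, so it must commit to a single rate-power pair $(R(\cdot),P(\cdot))$ that is outage-free for every admissible active configuration. In particular, outage-freeness must hold on the diagonal realization in which all $L$ active users see the same fade $h$ --- a point that lies in the joint fading space whenever $h \in \mathrm{supp}(\Psi)$, by independence. The sum-rate constraint of $C_{MAC}$ at this point forces $L\,R(h) \le \frac 12 \log(1 + L h^2 P(h))$, i.e. $R(h)$ can be no larger than the midpoint rate. Writing the total throughput as $\frac LK \sum_{k=1}^K \mathbb E\, R_k(H)$ and inserting this bound, together with the per-user budget $\mathbb E\, P_k(H) \le \frac KL P_{avg}$, each term is at most the common water-filling value $V$; hence $C_{LOOK} \le V$. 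Optimizing the power allocation by a standard Lagrangian/water-filling argument (as in \cite{KnoppHumblet95}) yields the stated $P(h)$ and closes the two bounds.

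The main obstacle, and the step deserving the most care, is the converse reduction to the diagonal configuration: I must argue that the no-coordination structure of the LOOK channel genuinely forces the \emph{same} $R(\cdot)$ across users and active sets, and that outage-freeness on the (measure-zero but support-contained) equal-fade realization is legitimately demanded by the definition of $\Theta_{MAC}$, invoking closedness of $C_{MAC}(\bar h,\bar P)$ and a continuity/limiting argument if one prefers to avoid the exact diagonal. By contrast, the power-budget bookkeeping through the $K/L$ inflation factor is routine once the activation probability $L/K$ is identified, and the terminal water-filling is the usual single-user computation.
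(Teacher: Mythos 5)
Your overall route coincides with the paper's: achievability via the midpoint strategy \eqref{eq:mid:rate} run by the $L$ active users with the duty-cycle-inflated budget $\frac KL P^{avg}$ (which is essentially all the paper says in its one-line proof), and a converse via the equal-fade (``diagonal'') realization, which is exactly the mechanism behind the paper's Lemma~\ref{lem:up:1}. Your achievability half is correct as written.

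The converse, however, has a genuine flaw precisely where you placed your emphasis. You propose to show that the no-coordination structure ``forces the same $R(\cdot)$ across users'' and then derive the per-user bound $L\,R(h)\le\frac12\log(1+Lh^2P(h))$ from that symmetry. Nothing in the model forces symmetric strategies --- the $K$ users may commit to different maps $(R_k,P_k)$, and a converse must cover that case --- so this reduction is not provable and the per-user bound does not follow. It is also unnecessary: for each active set $S$ the diagonal realization gives the sum-form bound $\sum_{k\in S}R_k(h)\le\frac12\log\bigl(1+h^2\sum_{k\in S}P_k(h)\bigr)$; since all marginals share the cdf $\Psi$, the linearity trick of \eqref{eq:tput:defn} turns the throughput into $\int d\Psi(h)\,\sum_{k\in S}R_k(h)$ averaged over the $\binom{K}{L}$ active sets, and Jensen (concavity of the logarithm) together with relaxing the individual budgets $\mathbb{E}\,P_k(H)\le\frac KL P^{avg}$ to a single sum-power constraint yields $C_{LOOK}\le C_1(\Psi,KP^{avg})$, mirroring Lemma~\ref{lem:up:1} verbatim. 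Separately, your closing step --- ``a standard Lagrangian/water-filling argument yields the stated $P(h)$'' --- is not right as written: maximizing $\frac12\int\log(1+Lh^2P(h))\,d\Psi(h)$ subject to $\int P\,d\Psi=\frac KL P^{avg}$ gives $P(h)=\bigl(\frac1\lambda-\frac1{L|h|^2}\bigr)^+$, i.e.\ $LP(h)$ water-fills against $1/|h|^2$ at sum power $KP^{avg}$, matching the converse value $C_1(\Psi,KP^{avg})$. The kernel $\frac1{|h|^2}$ in the theorem's displayed $P(h)$ is inconsistent with the factor $Lh^2P(h)$ inside the logarithm (evidently a typo in the statement), and a careful execution of your own Lagrangian step would have surfaced this discrepancy rather than asserting that it delivers the stated form.
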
 
The proof follows in a straightforward manner by each active-user employing its
midpoint rate for communicating, as in \eqref{eq:mid:rate}. We do not repeat the steps here. Notice that for any 
fixed $L$, the throughput goes unbounded 
with $K$, contradictory to what one expect in a discrete memoryless case. 
This is due to the fact that each active user can scale his/her power to make up
for its inactivity period.  

If the users have different average powers, we can adapt the alpha-midpoint strategy
by forming virtual users, each one having the same average power. 
The key parameter which will guarantee throughput optimality is the global knowledge of
the total number of virtual users. Thus, a good design principle for such systems is to
 estimate/access  the total number of virtual users in a block, rather than obtaining 
some limited information about the current received powers. Notice that the number of virual 
users is \textit{largely} a function of the total average power of the active users.


\section{Conclusion\label{sec:conc}}
We have presented throughput optimal outage-free communication schemes for a 
block-fading MAC with identical channel statistics, and distributed (individual) CSI.
While the assumption of symmetric channel statistics is very relevant in several
situations, we are currenly extending our work to take care of asymmetric channel
statistics. Further extensions to MIMO channels are also straightforward. 
The current paper focussed on the adaptive sum-capacity, which actually is a step
towards computing the full  capacity
region, a possible future work.   

In case of  additional partial finite-rate CSI, we computed the sum-capaity under
identical users and symmetric CSI. It will be interesting to relax this assumption
and compare the performance, in terms of non-identical users or assymmetric CSI. 
While we limited our discussion here mostly to the block-coding case, ensuring no outage, it will
also be interesting to see whether we can bridge these results to the ergodic
capacity-achieving schemes.





\begin{appendices}
\section{}\label{app:psi:proof}

\vspace*{-0.25cm}
\noindent \textit{Proof of Theorem~\ref{thm:psi}}:\\
We will show the proof for a $2$ user system for simplicity.
Let $\hat h_i$ denote the CSI communicated from
user $i$ to all others. User~$1$ employs a power of $P_1(h_1,\hat h_2)$
and user~$2$ spends $P_2(\hat h_1, h_2)$. Let $R_1(h_1,\hat h_2)$ and  
$R_2(\hat h_1,h_2)$ be the
respective rates chosen. We can express the average sum-rate as,
\begin{align} \label{eq:rate:psi:parts}
R_1 + & R_2 = \int_{G\times G}  \left( R_1(h_1, \hat h_2) + R_2 (\hat h_1, h_2) \right) d\Psi(h_1, h_2) 
	 + \int_{B\times B}  
		\left( R_1(h_1, \hat h_2) + R_2 (\hat h_1, h_2) \right) d\Psi(h_1, h_2) \notag\\
	&  + 
		\int_{B\times G}  \left( R_1(h_1, \hat h_2) + R_2 (\hat h_1, h_2) \right) d\Psi(h_1, h_2) 
	  + 
		\int_{G\times B}  \left( R_1(h_1, \hat h_2) + R_2 (\hat h_1, h_2) \right) d\Psi(h_1, h_2). 
\end{align}
Consider the first term in the summation of the right hand side. By suitably integrating, it can be
written as a single integral,
\begin{align} \label{eq:rate:gbound}
\mu(G)\int_G (R_1(h,G) +  R_2(G,h)) d\Psi(h) \leq   \frac{\mu(G)}2 \int_G \log\left(1 + h^2 (P_1(h,G) + P_2(G,h))\right) d\Psi(h),
\end{align}
which is the sum-rate bound of the corresponding MAC. Similarly, for the second term, 
\begin{align} \label{eq:rate:bbound}
\mu(B)\int_B (R_1(h,B) +  R_2(B,h)) d\Psi(h)   \leq  \frac{\mu(B)}2 \int_B 
	\log\left(1 + h^2 (P_1(h,B) + P_2(B,h))\right)d\Psi(h).
\end{align}
As for the third and fourth terms,  the information on who has the better channel is readily 
available to both parties here. Let us now consider only those  channel states 
$(h_1, h_2) \in \{(G\times B) \bigcup (B\times G)\}$. Let the
average power expenditure on these channel states be $P_{GB}$. Suppose we
 relax our assumption, and give full CSI to each transmitter whenever  one of the 
links is in state $G$ and the other in $B$. Furthermore,  let us enforce only a average 
sum-power constraint
of $P_{GB}$ in these states. In such a system, only the better user transmits with  an 
appropriate power~\cite{KnoppHumblet95}.  This fact can be utilized along with
\eqref{eq:rate:gbound} and \eqref{eq:rate:bbound} to equivalently write the maximum 
throughput as
\begin{multline} \label{eq:jstar:one}
J^* =  \max 
 \frac{\mu(B)}2  \int_B  
        \log\left(1 + h^2 (P_1(h,B) + P_2(B,h))\right)d\Psi(h)  +
\frac{\mu(G)}2 \int_G \log\left(1 + h^2 (P_1(h,G) + P_2(G,h))\right) d\Psi(h)  \\
	+  
	 \frac{\mu(B)}2 \int_G (\log(1 + h^2 P_1(h,B)) +  \log(1 + h^2 P_2(B,h)) )d\Psi(h),
\end{multline}
where the maximization is over $P_1(\cdot, \cdot)$ and $P_2(\cdot, \cdot)$. Furthermore, 
the original individual power constraint is relaxed to an average sum-power 
constraint of the form,
\begin{multline*}
\mu(B) \int_B (P_1(h,B) + P_2(B,h))d\Psi(h) +  \\ \mu(G) \int_G (P_1(h,G) + P_2(G,h)) d\Psi(h) 
	+   \mu(B) \int_{G} (P_1(h,B) + P_2(B,h)) d\Psi(h) \leq 2P^{avg}.
\end{multline*}
Notice that our integration now is over just one variable. Let us denote,
\begin{align*}
P_B(h) = \frac{P_1(h,B) + P_2(B,h)}2 \text{ and }P_G(h) = \frac{P_1(h,G) + P_2(G,h)}2 .
\end{align*}
By the  concavity of logarithm, the maximization can be bounded in terms of the new
variable as
\begin{multline} \label{eq:jstar:star}
J^* \leq \max \frac{\mu(B)}2  \int_B  
        \log\left(1 + h^2 2 P_B(h) \right)d\Psi(h)  +
\frac{\mu(G)}2 \int_G \log\left(1 + h^2 2 P_G(h)\right) d\Psi(h)  \\
        +  
         \frac{\mu(B)}2 \int_G 2 \log(1 + h^2 P_B(h)) d\Psi(h).
\end{multline}
The power constraint, in the new notation, is
\begin{align}
\mu(B) \int_{B} 2P_B(h) d\Psi(h) +  \mu(G) \int_G 2 P_G(h) d\Psi(h) 
	+   \mu(B) \int_{G} 2P_B(h) d\Psi(h) \leq 2P^{avg}.
\end{align}
Let the RHS of \eqref{eq:jstar:star} be denoted as $J^{**}$.  
Further simplification is possible by treating the variable $h$ as one which
belongs to a single-user channel with appropriate distribution and an
average power of $2P^{avg}$. 
\begin{lemma} For $2$ identical-users with individual cdf  $\Psi(\cdot)$,
the maximal throughput with partial CSI is $C_1(\Psi^{\prime},2P^{avg})$, where
\begin{align}
d\Psi^{\prime}(h) = \begin{cases} 
		d\Psi(h)  \mu(B)  \text{ if } h \in B \\
		d\Psi(h) (1+\mu(B)) \text{ if } h \in G  
		\end{cases}
\end{align}
\end{lemma}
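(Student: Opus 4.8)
The plan is to evaluate the relaxed upper bound $J^{**}$ of \eqref{eq:jstar:star} in closed form and show it coincides with the single-user water-filling quantity $C_1(\Psi^{\prime},2P^{avg})$; matching achievability then follows by instantiating a midpoint-type allocation that meets the bound with equality, exactly as in the proof of Lemma~\ref{lem:in:2}. As a preliminary sanity check I would verify that $\Psi^{\prime}$ is a genuine probability law: its total mass is $\mu(B)\,\mu(B) + (1+\mu(B))\,\mu(G) = \mu(B)^2 + (1-\mu(B)^2) = 1$, using $\mu(G)=1-\mu(B)$.

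For the core step I would attach a single Lagrange multiplier $\nu$ to the relaxed sum-power constraint accompanying \eqref{eq:jstar:star} and maximize over the three power profiles that appear in $J^{**}$, namely $P_B(\cdot)$ on $B$, $P_G(\cdot)$ on $G$, and $P_B(\cdot)$ on $G$. For each fixed $h$ the relevant power variable enters exactly one summand, and each summand is concave in it, so the pointwise stationarity conditions are both necessary and sufficient. Differentiating term by term yields
\begin{align*}
\frac{h^2}{1+2h^2P_B(h)}=\nu\ (h\in B),\qquad
\frac{h^2}{1+2h^2P_G(h)}=\nu\ (h\in G),\qquad
\frac{h^2}{1+h^2P_B(h)}=\nu\ (h\in G),
\end{align*}
together with the complementary-slackness clipping at $0$. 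The point of the computation is that, writing $P^*(h)=\left(\tfrac1\nu-\tfrac1{h^2}\right)^+$, all three conditions drive the \emph{received} power to one common water level: $2h^2P_B^*(h)=h^2P^*(h)$ on $B$, $2h^2P_G^*(h)=h^2P^*(h)$ on $G$, and $h^2P_B^*(h)=h^2P^*(h)$ on $G$ (equivalently $P_B^*=P^*/2$, $P_G^*=P^*/2$, and $P_B^*=P^*$ respectively). Hence every logarithm in $J^{**}$ collapses to $\log\!\left(1+h^2P^*(h)\right)$.

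Substituting the optimizers back, the objective becomes $\tfrac{\mu(B)}{2}\int_B\log(1+h^2P^*)\,d\Psi + \big(\tfrac{\mu(G)}{2}+\mu(B)\big)\int_G\log(1+h^2P^*)\,d\Psi$, and since $\tfrac{\mu(G)}{2}+\mu(B)=\tfrac{1+\mu(B)}{2}$ the two pieces reassemble into $\tfrac12\int\log(1+h^2P^*)\,d\Psi^{\prime}$. Running the same substitution through the constraint $\mu(B)\int_B P_B^* + \mu(G)\int_G P_G^* + \mu(B)\int_G P_B^* = P^{avg}$ turns it into $\tfrac12\int P^*\,d\Psi^{\prime}=P^{avg}$, i.e. $\int P^*\,d\Psi^{\prime}=2P^{avg}$, with $\nu$ playing the role of the water level $\lambda$. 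Comparing with \eqref{eq:tput:su}--\eqref{eq:tput:su:pc} identifies $J^{**}=C_1(\Psi^{\prime},2P^{avg})$. I expect the main obstacle to be bookkeeping rather than conceptual: the three terms carry different internal factors ($\log(1+2h^2P)$ versus $\log(1+h^2P)$) and different power weights, so the delicate point is checking that these mismatches cancel precisely, so that the \emph{received} powers, not the transmit powers, share one water level---this is exactly what makes the single-user reinterpretation over $\Psi^{\prime}$ exact.
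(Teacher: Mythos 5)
Your proposal is correct, and it reaches the lemma by a genuinely different route for the key (converse) step. I verified your bookkeeping: with the halved sum-power constraint, the three pointwise stationarity conditions do share a single constant $\nu$, yielding $P_B=P^*/2$ on $B$, $P_G=P^*/2$ on $G$ and $P_B=P^*$ on $G$ with $P^*(h)=\left(\tfrac1\nu-\tfrac1{h^2}\right)^+$, so every \emph{received} power sits at one water level, each logarithm collapses to $\log(1+h^2P^*(h))$, and the weight identity $\tfrac{\mu(G)}2+\mu(B)=\tfrac{1+\mu(B)}2$ reassembles objective and constraint into $\tfrac12\int\log(1+h^2P^*)\,d\Psi^{\prime}$ and $\int P^*\,d\Psi^{\prime}=2P^{avg}$; since the problem is concave with a linear constraint, these KKT points are globally optimal and $J^{**}=C_1(\Psi^{\prime},2P^{avg})$ exactly. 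The paper never solves the optimization in \eqref{eq:jstar:star}: it proves only the inequality $J^{**}\leq C_1(\Psi^{\prime},2P^{avg})$ by a simulation argument --- for an \emph{arbitrary} feasible pair $(P_B,P_G)$ it constructs two single-user allocations $\hat P$ (equal to $2P_B$ on $B$, $2P_G$ on $G$) and $\tilde P$ (equal to $2P_B$ on $B$, $P_B$ on $G$) over the channel $\Psi^{\prime}$, time-shares them with fraction $\tfrac{\mu(G)}{1+\mu(B)}$, and observes that this reproduces the $J^{**}$ objective within the $2P^{avg}$ budget --- then closes the gap with an explicit achievable allocation. Your route buys more information (the exact value of $J^{**}$ and its optimizers, which are precisely the paper's achievability powers: $P^*(h)/2$ per user when both states agree, $P^*(h)$ for the lone good user in mixed states); the paper's route buys economy, needing no calculus or sufficiency verification. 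The one thin spot in your write-up is achievability: citing Lemma~\ref{lem:in:2} alone does not cover the mixed states, where only the good user transmits at single-user capacity rather than a midpoint rate. You should state the allocation explicitly (midpoint rates at power $P^*(h)/2$ each in the $G\!\times\!G$ and $B\!\times\!B$ states, silent bad user and full power $P^*(h)$ for the good user otherwise) and check the \emph{individual} average power constraints --- the relaxation to a sum-power constraint was valid only for the upper bound, and a short computation confirms each user averages exactly $P^{avg}$ under this symmetric scheme.
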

\begin{IEEEproof}
First we show that 
$$
C_1(\Psi^{\prime},2P^{avg}) \geq J^{**}.
$$
For the single user channel $\Psi^{\prime}(h)$, consider two power allocation
 schemes $\hat P$ and $\tilde P$ such that
\begin{align}
\hat P(h) =\begin{cases} 2 P_B(h) , h \in B \\ 2 P_G(h), h \in G \end{cases}
\end{align}
and 
\begin{align}
\tilde P(h) =\begin{cases} 2 P_B(h) , h \in B \\ P_B(h), h \in G \end{cases}.
\end{align}
If we use $\hat P$ for a fraction $\frac{\mu(G)}{1+\mu(B)}$ of the times over $\Psi^{\prime}(h)$,
and $\tilde P$ for the remaining fraction, the throughput is
\begin{multline}
\frac{\mu(B)}2 \int_B  \log(1 + h^2 P_B(h))d\Psi(h) + 
	\frac{1 +\mu(B)}2 \frac{\mu(G)}{1+\mu(B)} \int_G \log(1+h^2 2 P_G(h)) d\Psi(h) \\
	+ \frac{1+\mu(B)}2 \frac{2\mu(B)}{1+\mu(B)} \int_G \log(1+h^2 P_B(h)) d\Psi(h),
\end{multline}
which is indeed $J^{**}$. Notice that an average power constraint of $2P^{avg}$ is
maintained under this allocation. Let us now show  that $C_1(\Psi^{\prime},2P^{avg})$ is indeed
achievable for our MAC with partial CSI model. Let $P^{\prime}(h)$ be the optimal
single-user power allocation for the channel $\Psi^{\prime}(h)$. Consider the following
power allocation in \eqref{eq:jstar:one}.
\begin{align*}
P_1(h,G) &= P_2(h,G) = 0\,,\, \forall h \in B\,;\,\, &P_1(h,B) &= P_2(B,h) = P^{\prime}(h)\,,\, \forall h \in G \\
P_1(h,G) &= P_2(h,G) = \frac{P^{\prime}(h)}2 \,,\,\forall h \in  G\,;\,\, 
	&P_1(h,B) &= P_2(B,h) = \frac{P^{\prime}(h)}2\,,\, \forall h \in B
\end{align*}
The users will choose the midpoint rates whenever both users are  either in $B$ or in $G$. In
other cases, only the better user is active. 
Clearly the power constraints are met and the throughput is indeed $C_1(\Psi^{\prime},2P^{avg})$.
\end{IEEEproof}

For $L>2$ users, if there are $K\geq 1$ links in $G$, only those links with 
$h_k\in G$ will transmit at their respective $K-$ user mid-point rates. On
the other hand, if no links are in $G$, all $L$ users transmit at their respective
$L-$user mid-point rates.
 The power allocation can be effectively determined by single user water-filling
of the cdf $\Psi^{\prime}(h)$ given in Theorem~\ref{thm:psi}.

\section{}\label{app:non:ident}
\noindent \textit{Proof of Lemma~\ref{lem:up:gen:2}:}
Notice that we assume arbitrary channel statistics for the links. 
The following proposition on MAC captures the essential idea behind the result.

\begin{prop}\label{pro:tdm:opt}
For a given \emph{fixed} $L-$user MAC with link gains $h_1, \cdots, h_L$ and
respective average transmit powers $P_1, \cdots, P_L$, the maximal sum-rate
can be achieved by time-sharing.    
\end{prop}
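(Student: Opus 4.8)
The plan is to prove the proposition in two halves: a converse establishing that no strategy meeting the average powers can exceed the sum-rate $R^\star = \frac12\log\left(1 + \sum_{k=1}^L |h_k|^2 P_k\right)$, followed by an explicit time-sharing (in fact, pure TDM) scheme that attains $R^\star$ while respecting each constraint $P_k$. Since time-sharing is itself one admissible strategy, exhibiting a time-sharing scheme that meets the converse bound proves that the maximal sum-rate is achievable by time-sharing, which is exactly the assertion.

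For the converse I would admit the most general time-varying allocation, in which user $k$ expends instantaneous power $P_k(t)$ with time-average $P_k$. At every instant the instantaneous rate tuple must satisfy the full-set sum-rate constraint of $C_{MAC}(\bar h,\bar P)$, and averaging over the block followed by the concavity of the logarithm (Jensen's inequality) gives
$$ \frac1T\int_0^T \frac12\log\Bigl(1 + \sum_{k=1}^L |h_k|^2 P_k(t)\Bigr)\,dt \;\le\; \frac12\log\Bigl(1 + \sum_{k=1}^L |h_k|^2 \,\tfrac1T\!\int_0^T\! P_k(t)\,dt\Bigr) = R^\star. $$
This incidentally records the (perhaps counterintuitive) fact that, for a single \emph{fixed} MAC, spreading power unevenly across time buys nothing for the sum-rate.

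For achievability the idea is to let exactly one user be active at a time and to \emph{equalize the received SNR} across the solo slots. Setting $s = \sum_{k=1}^L |h_k|^2 P_k$, I would assign to user $k$ the time fraction $\alpha_k = |h_k|^2 P_k / s$ and the solo transmit power $\tilde P_k = P_k/\alpha_k = s/|h_k|^2$. Then $\alpha_k \ge 0$ with $\sum_k \alpha_k = s/s = 1$, the average power of user $k$ is $\alpha_k \tilde P_k = P_k$, and during its (interference-free) slot user $k$ sees received SNR $|h_k|^2\tilde P_k = s$ and hence achieves the point-to-point rate $\frac12\log(1+s)$. Summing the weighted contributions yields $\sum_{k=1}^L \alpha_k \cdot \frac12\log(1+s) = \frac12\log(1+s) = R^\star$, so this TDM scheme meets the converse with equality.

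I expect the main obstacle to be conceptual rather than technical: the a priori surprising claim that pure TDM (one user at a time) matches full joint transmission. The resolution is precisely the SNR-equalizing choice of time fractions $\alpha_k \propto |h_k|^2 P_k$ together with the compensating power boost $\tilde P_k = P_k/\alpha_k$, which the converse then certifies as optimal. In the broader use within Lemma~\ref{lem:up:gen:2}, applying this proposition segment-by-segment in $x$ lets me replace the coupled joint sum-rate by the decoupled time-shared form $\sum_{k} \alpha_k(x)\,\frac12\log(1 + h_k^2(x)\,P_k(x))$; the optimization then separates into $L$ single-user water-filling problems, producing the Lagrange multipliers $\lambda_k$ and the sharing functions $\alpha_k(x)$ asserted in \eqref{eq:up:gen:2}.
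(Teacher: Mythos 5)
Your proof is correct and follows essentially the same route as the paper: the identical SNR-equalizing time-sharing scheme with fractions $\alpha_k = h_k^2 P_k / \sum_j h_j^2 P_j$ and boosted solo power $P_k/\alpha_k$, so that every user transmits at received SNR $\sum_j h_j^2 P_j$ and the weighted rates telescope to the MAC sum-rate bound. The only difference is that you make the converse explicit (via Jensen's inequality over time-varying power allocations), whereas the paper simply invokes the standard sum-rate bound of the fixed MAC; this is a harmless, slightly more self-contained presentation of the same argument.
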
 

\begin{IEEEproof}
Let user~$i$ transmit for a fraction of time $\beta_i$
with power $\frac{P_i}{\beta_i}$, at its single user capacity. 
By choosing $\beta_i=\frac{h_i^2P_i}{\sum_k h_k^2 P_k}$ we get,
\begin{align}
\sum R_i = \sum_{i=1}^L \frac{\beta_i}2 \log(1+ \sum_{k=1}^{L} h_k^2 P_k),
\end{align}
which is indeed the MAC sum-rate bound. 
\end{IEEEproof}

Let us now relax the maximization in \eqref{eq:csimac:ub:1}. In particular, 
we replace $P_k[h_k(x)]$ by $P_k(\bar h(x))$, where $\bar h(x)$ is the
global fading vector corresponding to the same c.d.f. value $x$ at each 
transmitter. Thus our relaxed optimization problem is,
\begin{align}
\max \frac 12 \int_0^1 \log(1 + \sum_k h_k^2(x) P_k(\bar h(x))) dx,
\end{align}
such that 
\begin{align}
\int_0^1 P_k(\bar h(x)) dx = P_k^{avg},\, \forall k.
\end{align}
By defining Lagrange multipliers, $\lambda_i, 1\leq i \leq L$, one for each
constraint, we can equivalently maximize the cost $J$, where
\begin{align}
J= \frac 12 \int_0^1 \log(1 + \sum_k h_k^2(x) P_k(\bar h(x))) dx 
- \sum_{k=1}^L \lambda_k \int_0^1 P_k(\bar h(x)) dx.
\end{align}
Taking derivative w.r.t to $P_k(\cdot)$ and applying the boundary conditions 
\begin{align}
\frac 12 \frac{h_k^2(x)}{1+\sum_{k=1}^L h_k^2(x) P_k(\bar h(x))}  
	- \lambda_k \geq 0, \, 1\leq k \leq L,
\end{align}
where the inequality becomes equality for the active user-set (ones which are allocated 
non-zero power at a given value of $x$). 
Therefore, we can conclude that power is allocated to user $j$ only if
$$
\frac{h_j^2(x)}{\lambda_j} \geq \frac{h_i^2(x)}{\lambda_i} , \forall i \neq j.
$$
Let $\zeta(x)$ be the maximum value of $\frac{h_j^2(x)}{\lambda_j}$ over $1\leq j \leq L$.
%
Each active-user will achieve  $\zeta(x)$. However, Proposition~\ref{pro:tdm:opt}
will suggest that the active users can time share and achieve the sum-rate. The  power chosen 
by an active user is 
$$
P_i(h_i(x)) =  \max\{0, \frac 1{\lambda_i} - \frac 1{h_i^2(x)}\}.
$$
The instantaneous received power from active user~$i$ while in its transmitting time-fraction is
$\frac{h_i^2(x)}{\lambda_i} -1$. 
However, the fraction of time given to each active user is dependent on  
the channel-laws and average powers. Thus $\alpha_i(x)$ in \eqref{eq:up:gen:2} is the
time-fraction for the active user~$i$, for a given set of channels  determined by the cdf index $x$.
This concludes the proof of Lemma~\ref{lem:up:gen:2}.


\end{appendices}

\bibliographystyle{IEEEtran}
\bibliography{../biblio/poster}{}
\nocite{}

\section*{ACKNOWLEDGMENTS}
The authors  thank Urs Niesen for providing
leads to the usage of  single user decoding strategies. 

\end{document}